  \providecommand\BibTeX{{%
    \normalfont B\kern-0.5em{\scshape i\kern-0.25em b}\kern-0.8em\TeX}}}
\newtheorem*{rep@theorem}{\rep@title}
\newcommand{\newreptheorem}[2]{%
\newenvironment{rep#1}[1]{%
 \def\rep@title{#2 \ref{##1}}%
 \begin{rep@theorem}}%
 {\end{rep@theorem}}}
\newtheorem{theorem}{Theorem}[section]
\newtheorem{corollary}[theorem]{Corollary}
\newtheorem{lemma}[theorem]{Lemma}
\newtheorem{definition}[theorem]{Definition}
\scriptsize\color{Gray},
\newcommand*{\ColorIfNotInString}[1]{\iftoggle{InString}{#1}{\color{blue}#1}}%
\newcommand*{\ProcessQuote}[1]{#1\iftoggle{InString}{\global\togglefalse{InString}}{\global\toggletrue{InString}}}%
\definecolor{code_indent}{HTML}{CCCCCC}
\newenvironment{figureAsListing}
    {
    \addtocounter{figure}{-1}
    \refstepcounter{lstlisting}
     
    \begin{figure}[!htbp]
        
        % \vspace{-0.5em}
        \centering
    }
    { 
        \end{figure} 
    }
\def\O{\mathcal{O}}
\DeclareMathOperator{\E}{\mathbb{E}~}
\title{Provably-Efficient and Internally-Deterministic \\ Parallel Union-Find}
\author{
Alexander Fedorov \\
  IST Austria\\
  \texttt{alexander.fedorov@ist.ac.at} \\
  \And
  Diba Hashemi \\
  IST Austria\\
  \texttt{diba.hashemi@ist.ac.at}
  \And
  Giorgi Nadiradze \\
  IST Austria\\
  \texttt{giorgi.nadiradze@ist.ac.at}\\
  \And
    Dan Alistarh\\
  IST Austria\\
    \texttt{dan.alistarh@ist.ac.at}  
}
\begin{document}

\maketitle

\begin{abstract}
    Determining the degree of inherent parallelism in classical sequential algorithms and leveraging it for fast parallel execution is a key topic in parallel computing, 
    and detailed analyses are known for a wide range of classical algorithms. 
    In this paper, we perform the first such analysis for the fundamental \emph{Union-Find} problem, in which we are given a graph as a sequence of edges, and must maintain its connectivity structure under edge additions. We prove that classic sequential algorithms for this problem are well-parallelizable under reasonable assumptions, addressing a conjecture by [Blelloch, 2017]. More precisely, we show via a new potential argument that, under uniform random edge ordering, parallel union-find operations are unlikely to interfere: $T$ concurrent threads processing the graph in parallel will encounter memory contention $\O(T^2 \cdot \log |V| \cdot \log |E|)$ times in expectation, where $|E|$ and $|V|$ are the number of edges and nodes in the graph, respectively. 
    We leverage this result to design a new parallel Union-Find algorithm that is both \emph{internally deterministic}, i.e., its results are guaranteed to match those of a sequential execution, but also \emph{work-efficient and scalable}, as long as the number of threads $T$ is $\O(|E|^{\frac{1}{3} - \varepsilon})$, for an arbitrarily small constant $\varepsilon > 0$, which holds for most large real-world graphs. We  present lower bounds which show that our analysis is close to optimal, and experimental results suggesting that the performance cost of internal determinism is limited. 
\end{abstract}

\keywords{Union-Find \and Parallel algorithms \and Graph algorithms \and Deterministic parallelism}

\section{Introduction}

A popular approach to  efficient parallelization has been to leverage the \emph{inherent parallelism} present in many sequential algorithms, and graph problems have been shown to be a particularly fertile ground for this approach. 
A simple illustration is given by the greedy algorithm for Maximal Independent Set (MIS) on graphs in which we initially fix a random ordering of the graph nodes, after which we process nodes one-at-a-time, in this order, adding the node to the MIS if none of its earlier neighbors has been previously added to the MIS, and discarding it otherwise. When trying to parallelize such sequential algorithms, two key questions are 1) the likelihood of \emph{contention}, i.e. that two arbitrary nodes in the ordering have data dependencies,  and 2) the depth of the \emph{longest dependency chain} between ``dependent’’ nodes, for any given graph. Intuitively, the total number of nodes divided by the maximal dependency depth gives an estimate for how many nodes can be processed in parallel, on average. Assuming low dependency depth, e.g., sublinear in the number of nodes or edges, a second challenge is to design parallel algorithms which are able to leverage it for fast end-to-end runtimes, usually measured as the total number of thread memory accesses or \emph{work}, which includes managing any auxiliary data structures.

A rich line of work has investigated these questions for many standard problems, such as MIS~\cite{GreedyMISAreParallel, FischerNoever}, two-dimensional linear programming~\cite{ParallelismInRandomized}, random permutation, list, and tree contraction~\cite{ParallelRandomPermutation}, or Delaunay mesh triangulation~\cite{ParallelismInRandomized}. For example, the (expected) dependency depth for the above greedy MIS algorithm was first shown to be $\O(\log^2 |V|)$ for any input graph with $|V|$ vertices~\cite{GreedyMISAreParallel}, which was later reduced to a tight $\O(\log |V|)$ via a technical breakthrough by Fischer and Noever~\cite{FischerNoever}. These analyses can lead to optimal-work algorithms, and are complemented by frameworks which allow speedups to be realized in practice~\cite{GraphAlgorithmsCanBeFast}.

One classic problem for which the potential for parallelization is still not well-understood is Disjoint-Set Union / Union-Find~\cite{Tarjan1975}, in which we are given a sequence of graph edges, and must maintain connectivity structure under edge additions. Specifically, assuming a random edge ordering and an algorithm implementing one of the standard sequential linking strategies, e.g. linking by rank~\cite{Tarjan1975}, we wish to analyze the likelihood that processing two edges in parallel may lead to a potential data race in the Union-Find data structure, and whether work-efficient parallel algorithms with deterministic computations exist for this problem.
While Union-Find was conjectured to be provably highly-parallel by Blelloch~\cite{SomeSequentialAreParallel}, this question has so far remained open. 

\paragraph{Contribution} In this paper, we take a significant step towards resolving this problem for natural parallel variants of the Union-Find algorithm. We start with a simple and general ``edge collision’’ model, which captures the possible data races that may arise when executing such algorithms. We then present a new potential argument showing that, under random edge ordering, the number of edge collisions among $T$ parallel threads can be bounded by $\O(T^2  \log |V|  \log |E| )$ in expectation. This implies an  upper bound of expected $\O(|E|^{\frac 2 3} \cdot \log^{\frac 1 3} |E| \cdot \log^{\frac 1 3} |V|)$ on the dependency depth of this problem. Conversely, we show an iteration lower bound of $\Omega \left(\frac {\sqrt {|E|}} {\sqrt {\log {|E|}}}\right)$ for cycle graphs w.h.p. for any parallel algorithm which processes maximally-large edge batches without violating dependencies, which was suggested in~\cite{ParallelismInRandomized} as a general approach for randomized incremental algorithms in this setting.

Based on this new analysis, we propose a simple parallel algorithm with the total number of steps (work complexity) $\O( |E| \alpha( |V| ) + T^3  \alpha( |V| )  \log |V|  \log |E| )$ in expectation, on any graph with $|E|$ edges and $|V|$ nodes, using $T$ concurrent processors. (Here, $\alpha( \cdot )$ denotes the inverse-Ackermann function arising in the analysis of sequential Union-Find.) Thus, the overhead of parallelism, given by the second term, is negligible as long as $T = \O(|E|^{\frac{1}{3} - \varepsilon} )$, for any constant $\varepsilon > 0$, which should be reasonable in practice, as graph inputs tend to be large relative to parallelism. 

Our main technical contribution is in the potential argument bounding the collision probability between tasks processing two distinct edges in the random ordering. Specifically, we assume a sequential algorithm maintaining a standard ``compressed forest'' data structure, where elements are nodes, arranged into directed trees, and the tree root is the set representative~\cite{Tarjan1975}. The addition of a new edge may lead to  components being linked, where the link direction is decided by the algorithm. Two edges \emph{collide} if they cannot be processed in parallel: for instance, this can happen when the two edges processed simultaneously would lead to the same root being linked to  \emph{different} components. 

In this context, we first analyze a ``sequentialized'' variant of the process, in which, at each step, a new edge is added to the data structure, and consider the probability $p_t$ of two randomly chosen edges ``colliding'' after exactly $t$ steps. 
While we cannot bound $p_t$ independently of the graph structure, and $p_t$ may fluctuate significantly over steps, we are able to bound the \emph{average} value of $p_t$, taken over time steps $t$, via a new 
potential function, which we show to be well-correlated with the ``smoothed'' collision probabilities over time, $\sum\limits_{t = 0}^{|E|-1} p_t$. Specifically, we show that, for any linking strategy that bounds the maximum forest depth to $\O( \log | V |)$, the sum of collision probabilities over all edges will satisfy $\sum\limits_{t = 0}^{|E| - 1} p_t = \O( \log |V| \log |E|).$  
In turn, this implies a bound on the expected number of collisions over a number of parallel steps. 
We present a complementary lower bound showing that the number of collisions for any Union-Find algorithm is $\Omega( \log |V|)$ for cycle graphs. 

Next, we design a work-efficient algorithm which leverages these analytical observations. We apply the \emph{deterministic reservations} approach~\cite{InternallyDeterministicParallel}, which we customize to our setting. Specifically, for a well-chosen parameter $S$, the algorithm proceeds in ``windows'' of $S$ consecutive edges, where in each such stage the threads first attempt to ``mark'' roots in parallel via deterministic reservations, and then examine whether any of the reservations resulted in data conflicts because of the underlying graph structure. If no such conflict occurs, then the edges can be processed fully in parallel. Otherwise, the threads process the conflict-free prefix. Then, they proceed to execute a new window of size $S$ on the remaining suffix. The collision bound above implies that this algorithm has asymptotically-optimal work if the number of threads is $\O(|E|^{\frac 1 3 - \varepsilon})$. 

Our algorithm is \emph{internally deterministic}---roughly, given a fixed input ordering, one obtains a unique dependency graph between the tasks corresponding to the edges, and will therefore have the same complexity as an equivalent sequential execution. As a consequence of internal determinism, our algorithm should be generally-useful as an efficient sub-procedure: for example, we can leverage it for a new solution for the Dynamic Spanning Tree and the Minimal Spanning Tree (MST) problems. 
Our main result is as follows:

\begin{theorem}
There exists an internally-deterministic parallel Union-Find algorithm for CRCW PRAM model with \emph{priority writes} that has $\O(|E| \cdot \alpha(|V|))$ expected total work and $\O(|E|^{\frac 2 3} \cdot \emph{polylog}(|E|))$ parallel depth on a randomly shuffled sequence of edges and for the number of parallel threads $T = \O(|E|^{\frac 1 3 - \varepsilon})$. 
This also implies a parallel version of Kruskal's MST algorithm, which has the same work and depth bounds in the special case where we have a sorted sequence of edges, and edge weights are generated so that the sorted sequence is a random shuffle. 
\end{theorem}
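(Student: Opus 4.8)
The plan is to analyze the windowed \emph{deterministic reservations} algorithm described above, bounding its expected work and its parallel depth separately and then combining them. Throughout I fix a linking rule that keeps the forest depth $\O(\log|V|)$—union by rank, optionally with path compression \cite{Tarjan1975}—so that the earlier collision analysis applies verbatim, and I process each window of size $S = \Theta(T)$ consecutive edges with the $T$ available threads, leaving $T$ to be fixed at the end (around $T = \Theta(|E|^{\frac 1 3 - \varepsilon})$). Internal determinism will be enforced by resolving every contended root via a \emph{priority write} in favor of the edge that is earliest in the (fixed) input order.

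For the work bound I would split the total work into \emph{useful} and \emph{wasted} parts. The useful part consists of the find and link operations that are actually committed; since internal determinism guarantees that the committed operations—and the order in which each root is attached—coincide with a sequential execution on the same edge sequence, the classical amortized analysis of union-find applies unchanged and charges these operations $\O(|E|\,\alpha(|V|))$ in total. The wasted part consists of reservations inside a window that are discarded because an earlier edge in that window caused a collision. By the collision bound proved earlier, the expected number of colliding windows over the whole run is $\O(T^2\log|V|\log|E|)$, and each such window discards at most $\O(S\,\alpha(|V|)) = \O(T\,\alpha(|V|))$ re-examination work; multiplying gives expected wasted work $\O(T^3\,\alpha(|V|)\log|V|\log|E|)$, which is dominated by $|E|\,\alpha(|V|)$ precisely when $T = \O(|E|^{\frac 1 3 - \varepsilon})$. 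Summing the two parts yields the claimed expected total work $\O(|E|\,\alpha(|V|))$.

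For the depth bound I would count rounds (windows) and multiply by the per-round depth. A round either advances the frontier by a full window $S$ when it is conflict-free, or it halts at its first colliding edge; hence the number of rounds is at most $|E|/S$ plus the number of colliding windows. Balancing these two contributions with the collision bound (optimal window size $\Theta(|E|^{1/3})$ up to polylog factors) matches the dependency-depth estimate established earlier, namely $\O(|E|^{\frac 2 3}\cdot\mathrm{polylog}(|E|))$ rounds. It remains to check that a single round costs only $\mathrm{polylog}(|E|)$ depth on a CRCW PRAM with priority writes: the batch of parallel finds terminates in $\O(\log|V|)$ pointer-following steps because the forest depth is $\O(\log|V|)$; the root reservations and their conflict detection each take $\O(1)$ depth, as priority writes let the lowest-indexed edge win a contended cell in a single step; and the committed links (and any path compression, implemented by pointer jumping) add a further $\O(\log|V|)$ depth. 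Composing the two bounds gives parallel depth $\O(|E|^{\frac 2 3}\cdot\mathrm{polylog}(|E|))$.

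Internal determinism itself follows because the priority-write tie-break depends only on the input order: whenever two edges contend for the same root, the earlier one always wins, so the induced dependency graph among edge-tasks, the final forest, and the set of edges that triggered a link are all functions of the input ordering alone and not of the schedule \cite{InternallyDeterministicParallel}. The Kruskal corollary is then immediate: Kruskal's algorithm is union-find run on the edges in sorted weight order, adding an edge to the tree exactly when it links two distinct components, and when the weights are drawn so that the sorted order is a uniform random permutation this is precisely a random-order union-find instance, so the MST is recovered as the internally-deterministic set of linking edges with identical work and depth bounds. I expect the main obstacle to be the simultaneous optimization through the single window parameter: the collision bound must be strong enough that the choice $S\approx|E|^{1/3}$ keeps both the number of colliding windows low (for depth) and the retried work below $|E|\,\alpha(|V|)$ (for work)—reconciling the depth-optimal and work-optimal operating points through the $\varepsilon$ slack—and the concurrent, path-compressing finds must be shown to run in polylog depth and $\O(\alpha(|V|))$ amortized work \emph{without} breaking determinism under the priority-write semantics.
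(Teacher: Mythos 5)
Your proposal follows the paper's own route---windowed deterministic reservations, work split into committed plus wasted parts bounded via the collision theorem, and rounds bounded by $|E|/S$ plus the number of collisions---but it has a concrete gap in how the window size is set, and it leaves unproven exactly the per-window bounds that need the most machinery. The gap: you fix $S=\Theta(T)$. Then for $T=\O(|E|^{1/3-\varepsilon})$ the number of rounds is at least $|E|/S=\Omega(|E|^{2/3+\varepsilon})$, and since $|E|^{\varepsilon}$ grows faster than any polylogarithm, this already violates the claimed $\O(|E|^{2/3}\cdot\mathrm{polylog}(|E|))$ depth; for constant $T$ your algorithm even has depth $\Omega(|E|)$. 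Depth in the work--depth model is a property of the computation DAG and must not be tied to the processor count. The paper decouples the two parameters: it fixes $S=\Theta\bigl((|E|/(\log|V|\cdot\log|E|))^{1/3}\bigr)$ \emph{independently} of $T$. This single choice simultaneously gives $\O(|E|^{2/3}\log^{1/3}|E|\log^{1/3}|V|)$ rounds (hence the depth bound) and extra work $\O(S^3\alpha(|V|)\log|V|\log|E|)=\O(|E|\cdot\alpha(|V|))$ (hence work efficiency), and any $T\leq S$ threads---in particular $T=\O(|E|^{1/3-\varepsilon})$---can then be used to schedule each round. So the ``tension'' you defer to the end of your proposal is not something to be reconciled through the $\varepsilon$ slack; it is an artifact of the coupling $S=\Theta(T)$, and under that coupling the theorem as stated is false.

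The second gap is that your per-round bounds are asserted circularly. You charge the committed operations $\O(|E|\cdot\alpha(|V|))$ because ``the classical amortized analysis applies unchanged,'' and you bound the parallel finds by $\O(\log|V|)$ ``because the forest depth is $\O(\log|V|)$.'' But naive parallel linking inside a window destroys the linking-strategy invariant that guarantees logarithmic depth---the paper's example is a directed path, where a single window can produce a forest of linear depth---and concurrent finds with compression do not follow the sequential schedule that the amortized bound presupposes. The paper closes this hole by importing machinery from Simsiri et al.: \texttt{Parallel-Link-All}, which groups the roots to be merged via a linear-work connected-components computation and merges them by divide-and-conquer so that the linking invariants (and hence $\O(\log|V|)$ depth) are preserved across iterations, plus a BFS-like bulk path compaction; this is what yields $\O(S\cdot\alpha(|V|))$ work and polylogarithmic span per iteration. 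You flag this as an anticipated obstacle, but it is a required ingredient of the proof, not a sanity check. Two smaller omissions: (i) your wasted-work bound applies the collision probabilities $p_t$ to the windows the algorithm actually visits, whose start positions are correlated with the edge ordering (a window begins right after a collision), so the tail is no longer a uniform shuffle; the paper devotes a substantial part of its iteration-count proof to exactly this, fixing it either by processing the colliding edge separately or by a coupling with a toy algorithm that counts collisions over all windows. (ii) The claim that the committed unions coincide with those of the sequential execution---which your work bound relies on---is the statement of internal determinism itself; it does not follow from priority-write tie-breaking alone (Blelloch et al.'s algorithm uses the same tie-breaking and is not internally deterministic), but from the halt-at-first-collision rule, and the paper proves it by a separate case analysis that your proposal would need to reproduce.
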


\paragraph{Related Work}
Our work extends the line of research analyzing inherent parallelism in classical sequential algorithms~\cite{GreedyMISAreParallel, FischerNoever, ParallelismInRandomized, ParallelRandomPermutation, GraphAlgorithmsCanBeFast} to show that Union-Find is also efficiently parallelizable. This partly addresses a question by Blelloch~\cite{SomeSequentialAreParallel}, who posed the dependency depth of Union-Find as an  open problem.

Prior work proposed a linear-work algorithm for graph connectivity~\cite{ParallelConnectedComponents}; however, this uses a different decomposition-based parallelization approach, which requires a static graph, and does not allow for incremental edge additions, nor online connectivity queries. 
The best known parallel algorithm for the a variant of Union-Find in the concurrent-read concurrent-write (CRCW) PRAM model was proposed by Simsiri et al.~\cite{WorkEfficientUnionFind}. They investigate a parallel version of \emph{batched} union-find, assuming that operations are inherently grouped into batches. The algorithm is work-efficient and guarantees  polylogarithmic span. It processes batches of \texttt{find} operations with \emph{path compaction} and asymptotically the same total work as in the sequential setting; however, this part requires significant synchronization between threads, and thus, it may not be very efficient in practice. For \emph{union} operations, the approach is to reduce the problem to a linear-work parallel connected components algorithm, such as \cite{ParallelConnectedComponents}. 
By comparison, our algorithm is internally deterministic, which allows it to be used as a sub-component for e.g. MST algorithms.

Anderson et al.~\cite{anderson2020work} considered incrementally maintaining a spanning tree in a ``sliding window'' model, where edges may appear and disappear over time, which is different from ours. 
Our results should also extend to analyzing alternative ways of parallelizing sequential iterative algorithms, such as by defining task priorities and executing them via a relaxed priority queue~\cite{alistarh2018relaxed, postnikova2022multi}, in which case it can upper bound the amount of wasted work due to out-of-order execution; we leave this analysis for future work.

We also mention existing work on efficient \emph{concurrent} variants of Union-Find algorithms, which modify the original linking approaches to employ atomic operations allowing  concurrent access to the Union-Find data structure~\cite{JayantiTB19, AlistarhFK19, JayantiT21}. Our work is only partly related, as it considers a completely different parallelization approach, with different metrics and progress guarantees. 
Specifically, the above line of work considers an asynchronous shared-memory model with atomic operations, studying total step complexity. By contrast, we consider a standard parallel model, and study classic notions of task parallelism such as dependency depth and collisions.

\section{Notation and Preliminaries}

Arguably, the most common data structure to address the Disjoint-Set Union / Union-Find problem is the \emph{compressed forest}~\cite{Tarjan1975}. Here, nodes / set elements form directed trees with roots used as \emph{representatives} of their sets. As a result, checking whether two elements are in the same set can be implemented by following \emph{parent links} of these elements in the directed trees, finding the root representatives, and comparing them. Merging two sets means adding a link from one root to another one.

Algorithms based on \emph{compressed forest} data structures differ in two key ingredients. The first is the \emph{linking technique}: by choosing which root becomes the common root when merging two trees, it is possible to limit tree depth. Three popular linking strategies are \emph{linking by size}---linking the smaller tree to the larger one, \emph{linking by rank}---similar to \emph{linking by size} but tree sizes are approximated, and \emph{linking by random priorities}---the root with lower random priority is linked to the root with higher priority. All three linking strategies achieve $\O(\log n)$ maximum tree depth~\cite{Tarjan1975, Tarjan2014}, though in the case of \emph{linking by random priorities} this is in expectation. The second key component is \emph{path compaction}---traversed paths in trees are shortened after every root search by replacing parents of every visited node with nodes higher in the tree. When combined with any of the linking techniques, this results in $\O(\alpha(n))$ amortized time per operation, where $\alpha(n)$ is the inverse-Ackermann function.

We will associate \texttt{unite(u, v)} operations with $(u, v)$ edges in the union graph $G = (V, E)$, where vertices correspond to Union-Find elements. These edges can also be viewed as tasks; to execute the task means to unite sets corresponding to edge endpoints. This high-level approach allows us to study union-find properties for different graph structures.

\subsection{Definition of Collisions}

The key definition for analysing parallel and concurrent dynamic graph algorithms is that of an \emph{edge collision}. Intuitively, two edges \emph{collide} when they cannot be processed in parallel. To define what this means for the Union-Find problem, consider Listing~\ref{listing:union-find}. When joining sets (i.e., trees), the classical Union-Find implementation compares their sizes and then links the root of the smaller one to the root of the larger tree (lines~\ref{listing:union:parent-write1} and \ref{listing:union:parent-write2}). It is easy to see that concurrent links of the same root to different trees result in the loss of a link, and thus, are incorrect. In practice, this is where concurrent Union-Find algorithms employ synchronization primitives~(e.g., \texttt{Compare-And-Set} in \cite{JayantiT21}). 
We say that two edges \emph{collide} when they both connect different components and share the ``smaller'' component (the exact order of the components is determined by the linking strategy). Note that size updates in lines~\ref{listing:union:size-write1} and \ref{listing:union:size-write2} are much less crucial, since they commute and can be implemented with simple atomic operations such as \texttt{fetch-and-add}.

\begin{figureAsListing}
\begin{lstlisting}
bool union(u, v):
  u := find_root(u)
  v := find_root(v)
  if u == v: return false // Already in the same component
  if u.size < v.size: // Which component is larger?
  <@\indentrule@>  u.parent = v <@\label{listing:union:parent-write1}@> // Link u to v
  <@\indentrule@>  v.size += u.size <@\label{listing:union:size-write1}@>
  else:
  <@\indentrule@>  v.parent = u <@\label{listing:union:parent-write2}@> // Link v to u
  <@\indentrule@>  u.size += v.size <@\label{listing:union:size-write2}@>
  return true
\end{lstlisting}
\caption{\texttt{union} operation in the Union-Find data structure with the \emph{linking by size} technique. }
\label{listing:union-find}
\end{figureAsListing}

\begin{figure}
\centering
\includegraphics[width=0.9\linewidth]{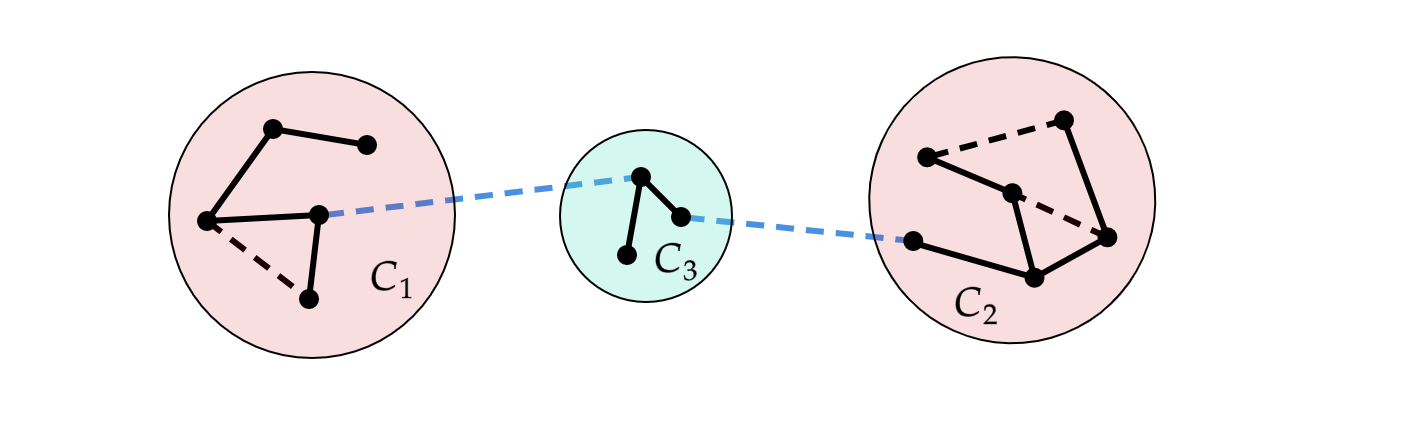}
\caption{An example of two edges colliding in the case of linking by size. The black solid edges are the ones already inserted. The dashed edges are yet to be added to the data structure. Since $C_3$ is the smaller component, the blue edges will try to write to the same memory location (the root of $C_3$), i.e., they have a collision and cannot be added to the data structure in parallel.}
\label{figure:collision_figure}
\end{figure}

One may ask whether the definition of collision can be simplified to requiring that ``two edges share an adjacent tree.'' This change will not make data structure implementations simpler but will simplify the analysis. Unfortunately, this  dramatically increases the number of collisions. Consider Erdos-R{\'e}nyi random graphs~\cite{ErdosRenyi} with the number of already inserted edges between $(\frac 1 2 + \varepsilon)|V|$ and $(1-\varepsilon)|V|$, where $\varepsilon$ is a constant strictly between $0$ and $\frac 1 4$. It is known that the largest component in this case is of size $\Omega(|V|^{2/3})$, and all other components are of sizes $\O(\log |V|)$, w.h.p.~\cite{ErdosRenyi}.  As a result, the probability of \emph{simplified} collision for two random edges is $\Omega(|V|^{-1/3})$, while in the case of our definition it is $\O\left(\frac {\log |V|} {|V|}\right)$. In fact, this suggests that for our definition the total number of collisions for random graphs is polylogarithmic, whereas for the simplified definition it is polynomial.

\subsection{The Random Process}
\label{subsection:random_process_definition}

Similar to prior work, e.g.~\cite{GreedyMISAreParallel, InternallyDeterministicParallel}, it is convenient to assume that the operations of the Union-Find algorithm are executed sequentially, i.e., edges are added one by one to the union graph $G$ in some order. This may not be the case in the actual execution, since parallelism allows to add several edges at once, although, in our parallel algorithm, to guarantee deterministic computations, these edges will still be from the prefix of the unprocessed edge sequence.  We call ``step $t$'' the moment of time when the $t$-th edge has been added to the data structure. Specifically, we denote the $t$-th added edge by $\varepsilon_t$ and the current set of inserted edges is $E_t = \{ \varepsilon_1, \cdots, \varepsilon_t \}$. Edges that are yet to be added (i.e., edges from $E\setminus E_t$) are called \emph{active}. In each step, we assume in our analysis that all present connected components are enumerated in the order defined by the used linking strategy (e.g., in order of increasing sizes when \emph{linking by size} is used) from $1$ to $C_t$. Finally, we call $m_i^t$ the number of active edges that connect the component of index $i$ to a connected component with a strictly greater index (i.e., to a ``larger'' component) in step $t$. Notice that these $m_i^t$ edges pairwise collide according to our definition of collision.

As in prior works analysing graph algorithms~\cite{ParallelismInRandomized,ParallelRandomPermutation,GreedyMISAreParallel}, we assume that the order of edges is uniformly random. Otherwise, there can be examples with $\Omega(|E|)$ collisions among subsequent pairs of edges in the sequence, resulting in no potential for parallelism. We will prove that the expected number of collisions among two random \emph{active} edges is small. More formally, let $X_t$ be an indicator random variable for the event ``two distinct active edges (i.e. from $E\setminus E_t$) chosen uniformly at random in step $t$   collide.'' Let the probability of collision for two random distinct active edges in step $t$ be $p_t = \E [X_t]$. 
% Note that this expectation is both over the choice of two random edges and random order of all edges. 
We will use conditional expectation $\E [X_t ~|~ \varepsilon_1, \ldots, \varepsilon_t]$ when previously added edges are known ($\varepsilon_t$ is a random variable when the order of edges is random).

The next theorem is the core of our performance analysis in Section~\ref{section:union-find-algorithms}, and is proven in Section~\ref{section:collision-analysis}. It states that the sum of probabilities of collisions over all steps is polylogarithmic. 

\begin{reptheorem}{theorem:sum_pt}
For a uniform random edge ordering, and any linking Union-Find strategy that bounds union-find forest depth to $\O(\log |V|)$, it holds that $\sum\limits_{t=0}^{|E|-1} p_t = \O(\log |V| \cdot \log |E|)$, where $|V|$ is the number of vertices and $|E|$ is the number of edges.
\end{reptheorem}

\subsection{Model of Parallelism}
\label{subsection:preliminaries-parallel-model}
For our parallel algorithm (Subsection~\ref{subsection:parallel-model}), we assume concurrent-reads concurrent-write parallel random-access machine (CRCW PRAM) with \emph{priority write}. Priority write (\texttt{write\_min}) is an atomic instruction that writes an input value to some memory address only if it is smaller than the current value at that location. This atomic instruction is widely used in parallel algorithms, especially for the Connected Components and Spanning Tree problems, and can be efficiently implemented~\cite{ParallelConnectedComponents, ParallelismInRandomized}.

We analyse our parallel algorithms using the standard work-depth approach. The \emph{work} of a parallel algorithm is the total number of executed instructions. The \emph{parallel depth} (or \emph{span}) is the length of the longest computational dependency chain.

\section{Collision Analysis}
\label{section:collision-analysis}

The goal of this section is to introduce an argument for bounding the number of collisions in the case of random edge ordering. First of all, we will prove the following lemma about collision probability when all added edges are known. Please recall that all notation has been defined in Subsection~\ref{subsection:random_process_definition}.

\begin{lemma} \label{lemma:pt-equals}
    For any $t \geq 0$, the collision probability in step $t$ for fixed $\varepsilon_1, \ldots, \varepsilon_t$ is 
    $$\E [X_t ~|~ \varepsilon_1, \ldots, \varepsilon_t] = \sum\limits_{i=1}^{C_t} \frac{m_i^t} {|E|-t} \cdot \frac {m_i^t-1}{|E|-t-1}$$
\end{lemma}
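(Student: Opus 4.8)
The plan is to condition on the history $\varepsilon_1, \ldots, \varepsilon_t$ — exactly the quantity the conditional expectation is taken over — and thereby reduce the computation to a static counting problem. Once these $t$ edges are fixed, the inserted set $E_t$, the induced connected components, their linking-order enumeration $1, \ldots, C_t$, and hence all the counts $m_i^t$ become deterministic. The only remaining randomness is the choice of an unordered pair of distinct active edges, drawn uniformly from the $|E| - t$ active edges; so the sample space consists of $\binom{|E|-t}{2}$ equally likely pairs, and $\E[X_t \mid \varepsilon_1, \ldots, \varepsilon_t]$ equals the number of colliding pairs divided by $\binom{|E|-t}{2}$.

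Next I would establish the combinatorial heart of the statement: a characterization of colliding pairs purely in terms of the $m_i^t$. First, any active edge joining two \emph{distinct} components is assigned, via the linking-order enumeration, to exactly one index, namely the smaller of the two component indices it connects; by definition such an edge is one of the $m_i^t$ edges for that $i$. Active edges whose endpoints already lie in the same component join no distinct components and hence belong to no collision. By the definition of collision — two edges collide precisely when they both join distinct components and share the \emph{smaller} component — two active edges collide if and only if they are assigned to the same index $i$, i.e.\ both are counted in the same $m_i^t$. Consequently the colliding pairs are exactly the $\binom{m_i^t}{2}$ pairs drawn from within each class $i$, and since an edge lies in at most one class these contributions are disjoint and sum without overlap.

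Combining the two steps, the number of colliding pairs is $\sum_{i=1}^{C_t} \binom{m_i^t}{2}$, so
\[
\E[X_t \mid \varepsilon_1, \ldots, \varepsilon_t] = \frac{\sum_{i=1}^{C_t} \binom{m_i^t}{2}}{\binom{|E|-t}{2}}.
\]
Writing $\binom{m_i^t}{2} = \tfrac{1}{2} m_i^t (m_i^t - 1)$ and $\binom{|E|-t}{2} = \tfrac{1}{2}(|E|-t)(|E|-t-1)$, the factors of $\tfrac{1}{2}$ cancel and the expression rearranges term-by-term into the claimed product form $\sum_{i=1}^{C_t} \frac{m_i^t}{|E|-t} \cdot \frac{m_i^t - 1}{|E|-t-1}$.

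The step I expect to require the most care is the biconditional in the middle paragraph, and specifically its ``only if'' direction: I must rule out spurious collisions between edges lying in different classes. The delicate case is a shared component that is the \emph{larger} endpoint for one edge and the \emph{smaller} for the other (e.g.\ edges $(a,b)$ and $(b,c)$ with component indices $a < b < c$); here the two edges would link two \emph{different} roots and hence do not collide, matching the requirement in the collision definition that the shared component be the smaller one for \emph{both} edges. Confirming that ``share the smaller component'' excludes this case is what guarantees that the $m_i^t$ classes genuinely partition the colliding structure, so that no pair is counted under two indices and none is missed.
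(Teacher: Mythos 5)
Your proof is correct and follows essentially the same route as the paper's: both rest on the observation that colliding pairs are exactly the pairs of edges counted within a common class $m_i^t$, and the rest is arithmetic. The only (cosmetic) difference is that you count unordered pairs against $\binom{|E|-t}{2}$, while the paper conditions sequentially (first edge lands in class $i$ with probability $\frac{m_i^t}{|E|-t}$, then the second collides with probability $\frac{m_i^t-1}{|E|-t-1}$); your explicit treatment of the ``only if'' direction and of shared-but-larger components is a welcome elaboration of what the paper leaves implicit.
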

\begin{proof}
	The probability that we select an edge from component $i$ to a larger component is the number of such edges, denoted by $m_i^t$, divided by the number of remaining edges in step $t$, which is $|E|-t$. 
	The only edges that can cause a collision with the chosen edge are other edges connecting component $i$ to a larger component. Consequently, the probability that another random edge causes a collision with regard to selected edge is $\frac{m_i^t-1}{|E|-t-1}$, and therefore, the probability that two random edges collide is $\sum\limits_{i=1}^{C_t} \frac{m_i^t} {|E|-t} \cdot \frac {m_i^t-1}{|E|-t-1}$. 
\end{proof}

\paragraph{Overview} A natural proof strategy would be to directly bound  $p_t$. However, the value of $p_t$ can heavily depend on the structure of graph $G$. For example, it follows by simple calculation that for a star graph, $p_0$ is at least some constant greater than zero, while for a cycle graph $p_0=o(1)$, but $p_{|E|-3}$ is a non-zero constant. Instead, we prove that $p_t$ cannot be high for all possible $t$ by employing a type of amortized analysis. Specifically, we present a metric which increases considerably when $p_t$ is high, and is bounded at the same time. As a result, this will help us to bound $ \sum\limits_{t=0}^{|E|-1} p_t$. 

Let $depth_t(u)$ be the depth of vertex $u$ in the \emph{uncompressed} Union-Find forest in step $t$, where ``uncompressed'' means that depth is counted ignoring any previous path compaction. Most reasonable linking techniques guarantee that for any vertex $u$, we have that $depth_t(u) \leq C\cdot \log |V|$ for some constant $C>0$. We define our potential function as follows:

\begin{definition} \label{definition:potential}
For each edge $(u, v) \in E$, we define its rank at time $t$ as $\phi_t(u, v) = depth_t(u) + depth_t(v)$. The potential function at $t$ will be $\Phi_t = \frac{|E|}{|E| - t} \sum\limits_{e \in E \setminus E_t} \phi_t(e) + \sum\limits_{i=1}^t \frac{|E|}{|E| - i} \cdot \phi_i(\varepsilon_i)$, $0 \leq t \leq |E| - 1$.
\end{definition}

% TODO: say about how we use it only after edges are processed

In more detail, the potential function is a sum over all edges of their ranks, but the ranks are multiplied by a constantly growing factor $\frac {|E|} {|E| - t}$, and the rank $\phi_i$ and its multiplier for edge $\varepsilon_i$ are ``frozen'' at time $i$ when the edge was processed.

Initially, we have $\Phi_0 = 0$, as the graph is empty, and thus, the ranks of all edges are zero. Let us observe how this potential is expected to change after each step, given that all remaining edges have equal probability to be added to the data structure. That is, we follow the sum $\E [\Phi_1 - \Phi_0] + \E[\Phi_2 - \Phi_1 ~|~ \varepsilon_1] + \ldots + \E[\Phi_{|E|-1} - \Phi_{|E|-2} ~|~ \varepsilon_1, \varepsilon_2, \ldots \varepsilon_{|E|-2}]$. 
Note that, by the law of total expectation, $\E\left[\E [\Phi_1 - \Phi_0] + \ldots + \E[\Phi_{|E|-1} - \Phi_{|E|-2} ~|~ \varepsilon_1, \varepsilon_2, \ldots \varepsilon_{|E|-2}]\right]$ $= \sum\limits_{t=1}^{|E|-1} \E [\Phi_t - \Phi_{t-1}] = \E [\Phi_{|E|-1}]$. 
Next, we connect this potential with the probability of collision.

\begin{lemma}\label{lemma:delta-lowerbound}
$\E [\Phi_{t+1} - \Phi_t ~|~ \varepsilon_1, \ldots, \varepsilon_t] \geq |E| \cdot \sum\limits_{i=1}^{C_t} \frac {m_i^t} {|E| - t} \cdot  \frac {m_i^t} {|E| - t - 1}$.
\end{lemma}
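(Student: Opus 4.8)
The plan is to track the one-step change by splitting the potential into its \emph{active} part $A_t := \sum_{e \in E \setminus E_t}\phi_t(e)$, carried with the multiplier $\frac{|E|}{|E|-t}$, and its \emph{frozen} part $F_t := \sum_{i=1}^t \frac{|E|}{|E|-i}\phi_i(\varepsilon_i)$. Passing from step $t$ to $t+1$, the frozen part gains exactly one summand, $\frac{|E|}{|E|-t-1}\phi_{t+1}(\varepsilon_{t+1})$, where $\varepsilon_{t+1}$ is the uniformly random active edge being added. The first thing I would establish is the bookkeeping identity $A_{t+1} = A_t + D_{t+1} - \phi_{t+1}(\varepsilon_{t+1})$, where $D_{t+1} := \sum_{e \in E \setminus E_t}\bigl(\phi_{t+1}(e)-\phi_t(e)\bigr) \geq 0$ is the total rank increase caused by the merge. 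This holds because $E \setminus E_t$ is $E \setminus E_{t+1}$ together with $\{\varepsilon_{t+1}\}$, and every edge active at time $t$ has a well-defined rank at both $t$ and $t+1$.

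Substituting this into $\Phi_{t+1}-\Phi_t = \frac{|E|}{|E|-t-1}A_{t+1} - \frac{|E|}{|E|-t}A_t + \frac{|E|}{|E|-t-1}\phi_{t+1}(\varepsilon_{t+1})$, the two copies of $\phi_{t+1}(\varepsilon_{t+1})$ cancel exactly; verifying this cancellation carefully (it is precisely why the frozen term is defined with the ``next'' multiplier $\frac{|E|}{|E|-t-1}$) is the first point I want to pin down. What remains is $A_t\cdot|E|\cdot\frac{1}{(|E|-t)(|E|-t-1)} + \frac{|E|}{|E|-t-1}D_{t+1}$. Since $A_t \geq 0$ and $D_{t+1}\geq 0$, I can discard the first summand and keep the clean bound $\Phi_{t+1}-\Phi_t \geq \frac{|E|}{|E|-t-1}D_{t+1}$, valid for every realization of $\varepsilon_{t+1}$.

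The heart of the argument is to lower-bound $\E[D_{t+1}\mid \varepsilon_1,\dots,\varepsilon_t]$. Here I use the structural fact about root-to-root linking: whenever the added edge merges two components, the strategy attaches the root of the lower-indexed component directly under the root of the higher-indexed one, so in the uncompressed forest every vertex of the lower-indexed component has its depth increase by exactly $1$ and all other depths are unchanged. Hence, if the merged (lower-indexed) component is $i$, then $D_{t+1}$ equals $d_i$, the total number of (active edge, endpoint) incidences landing in component $i$. A uniformly random active edge merges component $i$ as the smaller side with probability $\frac{m_i^t}{|E|-t}$, since there are exactly $m_i^t$ active edges joining $i$ to a higher-indexed component, while edges internal to a single component trigger no merge and contribute $0$. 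This gives $\E[D_{t+1}\mid \varepsilon_1,\dots,\varepsilon_t] = \sum_{i=1}^{C_t}\frac{m_i^t}{|E|-t}\,d_i$.

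Finally I would bound $d_i \geq m_i^t$: each of the $m_i^t$ active edges from $i$ to a larger component contributes at least one endpoint inside $i$, so $d_i$ is at least $m_i^t$ (edges internal to $i$ or directed to smaller components only increase it). Plugging in yields $\E[D_{t+1}\mid \varepsilon_1,\dots,\varepsilon_t] \geq \sum_{i=1}^{C_t}\frac{(m_i^t)^2}{|E|-t}$, and multiplying by $\frac{|E|}{|E|-t-1}$ gives exactly $|E|\sum_{i=1}^{C_t}\frac{m_i^t}{|E|-t}\cdot\frac{m_i^t}{|E|-t-1}$, as claimed. I do not expect any single calculation to be the obstacle; the delicate parts are getting the frozen-term/multiplier cancellation exactly right and justifying the ``depth increases by exactly $1$'' claim uniformly across the admissible linking strategies. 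Once those are secured, the expectation computation and the estimate $d_i \geq m_i^t$ are routine.
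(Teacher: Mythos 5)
Your proposal is correct and follows essentially the same argument as the paper: condition on which component $i$ is merged as the smaller side (probability $\frac{m_i^t}{|E|-t}$), note that each of the at least $m_i^t$ active edges incident to $i$ gains rank at least $1$, and carry the multiplier $\frac{|E|}{|E|-t-1}$. Your explicit active/frozen decomposition and the exact cancellation of $\phi_{t+1}(\varepsilon_{t+1})$ is just a more careful writeup of the bookkeeping the paper's proof treats implicitly (it drops the non-negative terms you also discard), so the two proofs coincide in substance.
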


\begin{proof}
 Assume that in step $t+1$, the newly added edge $\varepsilon_{t+1}$ is $(u, v) \in E$. If the edge is internal, i.e., connects nodes in the same component, the edge does not affect connectivity or edge ranks. The potential function still increases because of the increasing multiplier $\frac {|E|} {|E|-t}$, but we use a trivial lower bound of $0$ for this increase. If the edge is external, i.e., connects two different connected components, assume that the index of the smallest component is $i$. Then, the number of edges connecting this component to larger components is $m_i^t$, so the probability that this component is chosen as the smaller one is $\frac {m_i^t} {|E|-t}$. Due to the fact that the root of this component will be linked to another component, ranks of all active edges adjacent to it are increased by at least $1$. We can lower bound the number of such edges by $m_i^t$. This means that, with probability at least $\frac {m_i^t} {|E|-t}$, component $i$ is chosen and then the sum of active edge ranks is increased by at least $m_i^t$. Finally, taking into consideration the multiplier of the potential function and summing over all components, we get the required inequality
 
 \begin{gather*}
 \E [\Phi_{t+1} - \Phi_t ~|~ \varepsilon_1 \ldots \varepsilon_t] \geq \frac {|E|} {|E|-t-1} \cdot \sum\limits_{i=1}^{C_t} m_i^t \cdot \frac {m_i^t} {|E| - t} = |E| \cdot \sum\limits_{i=1}^{C_t} \frac {m_i^t} {|E| - t} \cdot  \frac {m_i^t} {|E| - t - 1}. 
 \end{gather*}

\end{proof}

\begin{lemma} \label{lemma:potential-collision-connection}
 The increment of the potential at each step $t \geq 0$ satisfies
$\E [\Phi_{t+1} - \Phi_t ~|~ \varepsilon_1, \ldots, \varepsilon_t] \geq |E| \cdot \E [X_t ~|~ \varepsilon_1, \ldots, \varepsilon_t].$
\end{lemma}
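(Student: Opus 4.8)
The plan is to obtain this inequality by directly chaining the two preceding lemmas, which together already isolate all the needed quantities. Lemma~\ref{lemma:delta-lowerbound} gives a lower bound on the expected potential increment written in terms of the counts $m_i^t$, while Lemma~\ref{lemma:pt-equals} expresses the conditional collision probability $\E[X_t \mid \varepsilon_1, \ldots, \varepsilon_t]$ in terms of the very same counts. Both statements are conditioned on the same history $\varepsilon_1, \ldots, \varepsilon_t$, and the quantities $C_t$ and $m_i^t$ are fully determined by that history, so the two bounds can be compared summand by summand without any further probabilistic argument.

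Concretely, I would begin from the conclusion of Lemma~\ref{lemma:delta-lowerbound},
$$\E[\Phi_{t+1} - \Phi_t \mid \varepsilon_1, \ldots, \varepsilon_t] \geq |E| \cdot \sum_{i=1}^{C_t} \frac{m_i^t}{|E|-t} \cdot \frac{m_i^t}{|E|-t-1},$$
and then lower-bound each summand by replacing the second numerator $m_i^t$ with $m_i^t - 1$. Since $m_i^t \geq m_i^t - 1$ holds trivially for every index $i$, and every denominator is positive over the relevant range, this is a legitimate term-by-term lower bound. The resulting expression is precisely $|E|$ times the formula for $\E[X_t \mid \varepsilon_1, \ldots, \varepsilon_t]$ furnished by Lemma~\ref{lemma:pt-equals}, which delivers the claimed inequality $\E[\Phi_{t+1} - \Phi_t \mid \varepsilon_1, \ldots, \varepsilon_t] \geq |E| \cdot \E[X_t \mid \varepsilon_1, \ldots, \varepsilon_t]$.

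There is no real technical obstacle here; the inequality is immediate once the two lemmas are placed side by side, and the only thing demanding mild care is the treatment of edge cases rather than the estimate itself. Specifically, I would confirm that the denominator $|E| - t - 1$ is strictly positive throughout the range of steps where the statement is invoked — that is, for $0 \leq t \leq |E| - 2$, where at least two distinct active edges exist and the collision event is well-defined — so that the term-by-term comparison is valid and $\E[X_t \mid \varepsilon_1, \ldots, \varepsilon_t]$ coincides with the expression of Lemma~\ref{lemma:pt-equals}. With that range fixed, the chain of inequalities closes cleanly.
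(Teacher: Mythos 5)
Your proof is correct and matches the paper's argument exactly: the paper also derives the lemma by combining Lemma~\ref{lemma:delta-lowerbound} with Lemma~\ref{lemma:pt-equals}, the comparison resting on the trivial bound $m_i^t \geq m_i^t - 1$ term by term. Your additional care about the range $0 \leq t \leq |E|-2$ (so that $|E|-t-1 > 0$ and two distinct active edges exist) is a sensible, if minor, refinement of what the paper leaves implicit.
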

\begin{proof}
Follows by combining Lemma~\ref{lemma:pt-equals} with the  inequality in Lemma~\ref{lemma:delta-lowerbound}. Note that these expectations are different {--} the left one is over the choice of $\varepsilon_{t+1}$, while the right one is over the choice of two random active edges. %Inequality of conditional expectations in this lemma in fact means that this inequality is true for any values of $\varepsilon_1 \ldots \varepsilon_t$. 
\end{proof}

\begin{corollary} \label{corollary:potential-collision-connection}
$\E [\Phi_{|E|-1}] \geq |E| \cdot \sum\limits_{t=0}^{|E|-2} p_t$.
\end{corollary}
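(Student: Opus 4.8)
The plan is to obtain the corollary directly from Lemma~\ref{lemma:potential-collision-connection} by first removing the conditioning via the tower property and then telescoping the potential increments. First I would start from the per-step bound
$$\E [\Phi_{t+1} - \Phi_t ~|~ \varepsilon_1, \ldots, \varepsilon_t] \geq |E| \cdot \E [X_t ~|~ \varepsilon_1, \ldots, \varepsilon_t],$$
which is exactly the statement of Lemma~\ref{lemma:potential-collision-connection} and holds for every $t \geq 0$. Both sides here are random variables, being functions of the random prefix $\varepsilon_1, \ldots, \varepsilon_t$, and the inequality holds for every realization of that prefix.

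Next I would take the outer expectation over $\varepsilon_1, \ldots, \varepsilon_t$ on both sides. By the law of total expectation, $\E[\E[\Phi_{t+1} - \Phi_t ~|~ \varepsilon_1, \ldots, \varepsilon_t]] = \E[\Phi_{t+1} - \Phi_t]$, and likewise the right-hand side collapses to $|E| \cdot \E[X_t] = |E| \cdot p_t$ using the definition $p_t = \E[X_t]$. Since taking expectations preserves a pointwise inequality between random variables (monotonicity of expectation), this yields the unconditional bound $\E[\Phi_{t+1} - \Phi_t] \geq |E| \cdot p_t$ for each $t$.

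Finally I would sum this over $t = 0, 1, \ldots, |E|-2$. The left-hand side telescopes to $\E[\Phi_{|E|-1}] - \E[\Phi_0]$, and since the forest is empty at time $0$ all edge ranks are zero, giving $\Phi_0 = 0$; hence the left side is simply $\E[\Phi_{|E|-1}]$. The right-hand side becomes $|E| \cdot \sum_{t=0}^{|E|-2} p_t$, and combining the two delivers the claimed inequality $\E[\Phi_{|E|-1}] \geq |E| \cdot \sum_{t=0}^{|E|-2} p_t$.

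I do not expect any genuine obstacle here, since the substantive work was already carried out in Lemmas~\ref{lemma:pt-equals}--\ref{lemma:potential-collision-connection}; the corollary is a routine composition of the tower property with a telescoping sum. The only point deserving a moment of care is keeping the index ranges aligned: the potential is defined for $0 \le t \le |E|-1$, the per-step inequality is applied for $t$ from $0$ to $|E|-2$, and so the telescope terminates cleanly at $\Phi_{|E|-1}$, with $\Phi_0 = 0$ at the other end.
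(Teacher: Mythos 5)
Your proposal is correct and follows essentially the same route as the paper's proof: both apply the law of total expectation to strip the conditioning in Lemma~\ref{lemma:potential-collision-connection}, use the definition $p_t = \E[X_t]$, and telescope the increments using $\Phi_0 = 0$. The only difference is cosmetic --- you take expectations term-by-term before summing, whereas the paper sums the conditional inequalities first and then takes the expectation of the sum.
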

\begin{proof}
Summing up inequality from Lemma~\ref{lemma:potential-collision-connection} over all steps and taking expectation of the sum, we get the following:
\begin{gather*}
    \E \left[\sum\limits_{t=0}^{|E|-2} \E [\Phi_{t+1} - \Phi_t ~|~ \varepsilon_1, \ldots, \varepsilon_t]\right] \geq |E| \cdot \E \left[\sum\limits_{t=0}^{|E|-2} \E [X_t ~|~ \varepsilon_1, \ldots, \varepsilon_t]\right].
\end{gather*}
By the law of total expectation, this can be simplified to the next inequality:
\begin{gather*}
    \E \left[\sum\limits_{t=0}^{|E|-2} \Phi_{t+1} - \Phi_t \right] \geq |E| \cdot \E \left[\sum\limits_{t=0}^{|E|-2} X_t\right]
\end{gather*}
Finally, using the definition of $p_t$ and reducing the left sum, we deduce that:
\begin{gather*}
    \E [\Phi_{|E|-1}] = \E [\Phi_{|E|-1} - \Phi_0]  \geq |E| \cdot \sum\limits_{t=0}^{|E|-2} p_t.
\end{gather*}

\end{proof}

This corollary means that in order to upper bound the smoothed collision probabilities $\sum\limits_t p_t$, we can bound the potential function instead. We will do this in the next lemma. Note that for randomized linking strategies (e.g., \emph{linking by random priorities}) the next bound holds in expectation over the  randomness used in linking.

\begin{lemma} \label{lemma:potential-bound}
    For any linking strategy bounding maximal depth to $\O(\log |V|)$, we have $\Phi_{|E|-1} = \O({|E|} \cdot \log |V| \cdot \log |E|)$.
\end{lemma}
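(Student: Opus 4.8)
The plan is to reduce the entire bound to two facts: a uniform bound on edge ranks coming from the assumed depth bound, and the logarithmic growth of the harmonic series appearing in the frozen multipliers. First I would record the crucial consequence of the hypothesis: since the linking strategy guarantees $depth_t(u) \le C \log |V|$ for every vertex $u$ and every step $t$, and a rank is a sum of two such depths, we get the uniform bound $\phi_t(e) = depth_t(u) + depth_t(v) \le 2C \log |V|$ for every edge $e = (u,v)$ and every $t$. Note this holds at all times simultaneously, which is exactly what is needed, since the potential mixes ranks evaluated at different times.

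Next I would instantiate Definition~\ref{definition:potential} at $t = |E| - 1$ and bound the two summands separately. The first summand is $\frac{|E|}{|E| - (|E|-1)} \sum_{e \in E \setminus E_{|E|-1}} \phi_{|E|-1}(e) = |E| \sum_{e \in E \setminus E_{|E|-1}} \phi_{|E|-1}(e)$, and here the set $E \setminus E_{|E|-1}$ consists of the single remaining edge $\varepsilon_{|E|}$. Applying the uniform rank bound, this first summand is at most $|E| \cdot 2C \log |V| = \O(|E| \log |V|)$. For the second (frozen) summand $\sum_{i=1}^{|E|-1} \frac{|E|}{|E|-i} \phi_i(\varepsilon_i)$, I would factor the uniform rank bound out of each term to obtain the estimate $\sum_{i=1}^{|E|-1} \frac{|E|}{|E|-i} \phi_i(\varepsilon_i) \le 2C \log |V| \cdot |E| \sum_{i=1}^{|E|-1} \frac{1}{|E|-i}$.

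The remaining step is to recognize the sum $\sum_{i=1}^{|E|-1} \frac{1}{|E|-i}$ as a reindexed harmonic sum: substituting $j = |E| - i$ gives $\sum_{j=1}^{|E|-1} \frac{1}{j} = H_{|E|-1} = \O(\log |E|)$. Hence the second summand is $\O(|E| \cdot \log |V| \cdot \log |E|)$, which dominates the first, and combining the two bounds yields $\Phi_{|E|-1} = \O(|E| \cdot \log |V| \cdot \log |E|)$, as claimed.

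I do not expect a genuine obstacle here, since the lemma is essentially a direct consequence of the uniform rank bound together with a harmonic series evaluation; the arithmetic is routine once the right reindexing is spotted. The only subtlety I would flag is the randomized case (e.g. \emph{linking by random priorities}), where the depth bound $depth_t(u) \le C \log |V|$ holds only in expectation rather than deterministically. To handle this I would take expectations over the linking randomness at the outset and carry the inequalities through by linearity of expectation; because the multipliers $\frac{|E|}{|E|-i}$ and the harmonic-sum structure are deterministic, the same chain of estimates applies verbatim to $\E[\Phi_{|E|-1}]$, giving the stated bound in expectation.
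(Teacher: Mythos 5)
Your proposal is correct and takes essentially the same route as the paper: both proofs reduce the claim to the uniform rank bound $\phi_i(\varepsilon_i) \le 2C\log|V|$ (valid at all times, including the frozen terms) multiplied by the sum of the multipliers $\sum_{i}\frac{|E|}{|E|-i}$. The only cosmetic difference is that you evaluate this sum directly as a harmonic series $|E|\cdot H_{|E|-1} = \O(|E|\log|E|)$, whereas the paper bounds it by dyadic grouping (multiplier at most $2$ on the first $|E|/2$ edges, at most $4$ on the next $|E|/4$, and so on), which is the same estimate; your explicit handling of the randomized-linking case via linearity of expectation matches the paper's remark preceding the lemma.
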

\begin{proof}

When forest depth is $\O(\log |V|)$, it is easy to see that every edge rank is $\O(\log |V|)$, and thus, the sum of edge ranks is $\O(|E| \cdot \log |V|)$. The problem is the $\frac {|E|} {|E| - t}$ edge rank multiplier, which ranges from $1$ to $|E|$. That is why we ``freezed'' ranks of edges upon their processing.

Observe the edge rank multipliers for edges added in different steps. For the first $\frac {|E|} {2}$ edges, this multiplier is at most $\frac {|E|} {|E| - |E|/2} \leq 2$. Similarly, for the next $\frac {|E|} {4}$ edges, this multiplier is at most $\frac {|E|} {|E|-3|E|/4} \leq 4$. By continuing this progression, we get the following for forest depth $\leq C \cdot \log |V|$:

\begin{gather*}
    \Phi_{|E|-1} \leq \frac {|E|} 2 \cdot 2 \cdot C \cdot \log |V| \cdot 2 + \frac {|E|} 4 \cdot 4 \cdot C \cdot \log |V| \cdot 2 + \ldots 
    =\\= \sum\limits_{i=0}^{\log_2 |E|} |E| \cdot C \cdot \log |V| \cdot 2 = \O(|E| \cdot \log |V| \cdot \log |E|).
\end{gather*}
\end{proof}

\begin{theorem}
\label{theorem:sum_pt}
The sum of collision probabilities satisfies $\sum\limits_{t=0}^{|E|-1} p_t = \O(\log |V| \cdot \log |E|)$ given a random order of edges and any linking Union-Find strategy that bounds forest depth to $\O(\log |V|)$.
\end{theorem}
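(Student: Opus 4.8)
The plan is to assemble the chain of intermediate results already established, since essentially all of the work has been done in Corollary~\ref{corollary:potential-collision-connection} and Lemma~\ref{lemma:potential-bound}. The theorem is the composition of these two facts, together with a normalization by $|E|$ and a small bookkeeping step at the endpoint of the summation.

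First I would recall Corollary~\ref{corollary:potential-collision-connection}, which lower-bounds the expected terminal potential by the smoothed collision probabilities: $\E [\Phi_{|E|-1}] \geq |E| \cdot \sum\limits_{t=0}^{|E|-2} p_t$. Next I would invoke Lemma~\ref{lemma:potential-bound}, which upper-bounds the potential by $\Phi_{|E|-1} = \O(|E| \cdot \log |V| \cdot \log |E|)$. Since this bound holds deterministically for any fixed realization of the edge order (and, for randomized linking strategies such as linking by random priorities, in expectation over the linking randomness), taking expectations gives $\E [\Phi_{|E|-1}] = \O(|E| \cdot \log |V| \cdot \log |E|)$. Chaining the two inequalities yields $|E| \cdot \sum\limits_{t=0}^{|E|-2} p_t \leq \E [\Phi_{|E|-1}] = \O(|E| \cdot \log |V| \cdot \log |E|)$, and dividing through by $|E|$ produces $\sum\limits_{t=0}^{|E|-2} p_t = \O(\log |V| \cdot \log |E|)$.

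The only discrepancy with the claimed statement is the final term $p_{|E|-1}$, which Corollary~\ref{corollary:potential-collision-connection} does not cover. I would close this by observing that at step $t = |E|-1$ exactly $|E| - (|E|-1) = 1$ active edge remains, so there are no two distinct active edges available to select; hence $X_{|E|-1} = 0$ and $p_{|E|-1} = 0$. Including this vanishing term leaves the asymptotics unchanged, giving the desired $\sum\limits_{t=0}^{|E|-1} p_t = \O(\log |V| \cdot \log |E|)$. I do not anticipate any genuine obstacle: the proof is purely a matter of stitching together the prior lemmas. The only points requiring care are passing correctly from the pointwise (or, for randomized linking, in-expectation) potential bound of Lemma~\ref{lemma:potential-bound} to a bound on $\E [\Phi_{|E|-1}]$, and handling the off-by-one in the upper summation index so that the statement matches the range $t = 0$ to $|E|-1$ rather than the $t = 0$ to $|E|-2$ produced by the corollary.
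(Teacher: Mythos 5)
Your proposal is correct and takes essentially the same route as the paper, which likewise proves the theorem by chaining Corollary~\ref{corollary:potential-collision-connection} with Lemma~\ref{lemma:potential-bound} and dividing by $|E|$. Your extra observation that $p_{|E|-1}=0$ (since only one active edge remains at step $|E|-1$), which reconciles the summation range $t\leq |E|-2$ from the corollary with the range $t\leq |E|-1$ in the statement, is a small bookkeeping point the paper's own proof silently skips, but it does not constitute a different approach.
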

\begin{proof}
Combining Corollary~\ref{corollary:potential-collision-connection} and Lemma~\ref{lemma:potential-bound}, we conclude $\sum\limits_{t=0}^{|E|-2} p_t \leq  \frac {1} {|E|} \cdot \Phi_{|E|-1} = \O(\log |V| \cdot \log |E|)$.
\end{proof}

Surprisingly, the proof of Theorem~\ref{theorem:sum_pt} shows  $\sum\limits_{t=0}^{|E|/2} p_t \leq C \cdot \log |V|$, i.e., for the first half of edges the expected number of collisions is just $\O(\log |V|)$, which may mean that the factor of $\log |E|$ is an artefact of the current analysis.

\section{Work-Efficient Union-Find}
\label{section:union-find-algorithms}

We start by showing how to apply Theorem~\ref{theorem:sum_pt} to some practical Union-Find algorithms in various computation models. Specifically, we will speak about incremental dynamic connectivity and minimum spanning tree problems.

\subsection{Bounding Contention}
\label{subsection:concurrent-model}

The discussion so far assumed an idealized sequential execution. 
We now wish to analyse concurrent Union-Find algorithms, which can be implemented in practice via locks, hardware transactions, or lock-free primitives~\cite{AlistarhFK19, ConnectIt}, using a generalized concurrency model. In all these implementations, extra work comes from \emph{write contention}---situation when several threads try to modify the same memory location simultaneously. For locks, write contention means the need of waiting until another thread finishes modifying the required memory location, while for hardware transactions and lock-free primitives, it means retries of operations. Moreover, in practice, write contention causes additional L3 cache misses in NUMA (Non-Uniform Memory Access) computer architectures. This is why we will focus on bounding the probability that threads experience write contention when attempting to process edges in parallel. 

More formally, our basic concurrency model is: a set of $T \geq 2$ threads execute synchronously in iterations, where in each iteration each thread picks a remaining edge uniformly at random, and inserts it into the Union-Find data structure. 
The threads may pick colliding edges, in which case we register a \emph{write contention} event; otherwise, the edges are processed without contention. Either way, we assume that the edges are processed.
Our goal is to bound the expected total number of write contention events between threads when processing all edges. In our case, write contention between threads means collision between their edges according to our definition of collision. 

\begin{theorem} \label{theorem:concurrent-two-threads}
The expected number of write contention events for two threads in this concurrent model is bounded by $\O(\log |V| \cdot \log |E|)$. 
\end{theorem}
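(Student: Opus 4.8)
The plan is to connect the two-thread concurrent model directly to the sequentialized random process already analyzed in Theorem~\ref{theorem:sum_pt}. In the concurrent model, two threads proceed in synchronous iterations, each picking a remaining edge uniformly at random and inserting it; a write contention event is exactly a collision between the two edges chosen in that iteration. Since both models insert edges one prefix at a time and the key probabilistic quantity---the chance that two uniformly random active edges collide at a given state---is identical, the goal is to show that summing the per-iteration contention probabilities over the whole execution reproduces the smoothed sum $\sum_t p_t$ bounded in Theorem~\ref{theorem:sum_pt}.

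First I would set up the iteration structure carefully. With $T = 2$ threads, each iteration consumes either one edge (if the two threads happen to pick the same edge, or if we process only one after a collision) or two edges from the remaining pool. The subtlety is that the concurrent model processes \emph{two} edges per step whereas the sequential process of Section~\ref{section:collision-analysis} advances by a single edge. I would argue that conditioning on the set of already-inserted edges $E_t$, the probability that the two freshly chosen active edges collide is precisely $\E[X_t \mid \varepsilon_1,\ldots,\varepsilon_t]$ from Lemma~\ref{lemma:pt-equals}, because both threads draw uniformly and independently (up to the negligible same-edge event) from the active set $E \setminus E_t$. Thus each iteration's expected contention equals $p_t$ for the corresponding state $t$.

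Next I would bound the total expected contention by a sum of these $p_t$ values. Because each iteration advances the number of inserted edges by at least one, the sequence of states visited by the concurrent execution is a subsequence of the states $t = 0, 1, \ldots, |E|-1$ of the sequential process. Consequently the expected number of contention events is at most $\sum_{t=0}^{|E|-1} p_t$, which Theorem~\ref{theorem:sum_pt} bounds by $\O(\log |V| \cdot \log |E|)$. The cleanest way to make this rigorous is to couple the concurrent two-thread execution with the sequentialized process so that they insert edges in the same order, and charge each contention event to a distinct step of the sequential process; linearity of expectation then yields the claim.

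The main obstacle I expect is the mismatch between inserting two edges per iteration and the one-edge-per-step formulation of $p_t$. I need to ensure that when both threads pick distinct non-colliding edges, the joint insertion does not create a state unreachable by the sequential one-at-a-time process, and that the collision event registered by the pair is faithfully captured by $\E[X_t \mid \cdot]$ evaluated at the state \emph{before} either insertion. A secondary technical point is the measure-zero complication of two threads drawing the identical edge, which I would handle either by arguing it only reduces the number of distinct edges processed (and hence the opportunities for contention) or by folding it into the ``already in the same component'' case, neither of which increases the contention bound. Once the coupling is established, the result follows immediately from Theorem~\ref{theorem:sum_pt}.
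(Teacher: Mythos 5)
Your proposal takes essentially the same route as the paper's own proof: both identify the per-iteration contention probability with the collision probability $p_t$ of two random distinct active edges at the corresponding state of the sequentialized process (the paper phrases this as iteration $t$ corresponding to step $2t$, so the probability is exactly $p_{2t}$), and then bound the sum over the visited states by the full sum $\sum_{t=0}^{|E|-1} p_t = \O(\log |V| \cdot \log |E|)$ from Theorem~\ref{theorem:sum_pt}. The explicit coupling and the same-edge corner case you discuss are harmless elaborations of the paper's terser argument, not a different method.
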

\begin{proof}
Since edges are inserted randomly into  the data structure, this algorithm is almost the same as the random process described in Subsection~\ref{subsection:random_process_definition}. Specifically, iteration $t$ in this algorithm corresponds to step $2t$ in the random process because two new edges are added every iteration. Moreover, the probability of write contention at iteration $t$ is exactly $p_{2t}$ (see the definitions in Subsection~\ref{subsection:random_process_definition}). So, the total number of write contention events is $\sum\limits_{t=0}^{|E|/2} p_{2t} \leq  \sum\limits_{t=0}^{|E|} p_t = \O(\log |V| \cdot \log |E|)$. The last inequality is from Theorem~\ref{theorem:sum_pt}, stated in the end of the previous Section.
\end{proof}

\begin{theorem} \label{theorem:concurrent-T-threads}
The expected number of write contention events for $T$ threads in this concurrent  model is $\O(T^2 \cdot \log |V| \cdot \log |E|)$.
\end{theorem}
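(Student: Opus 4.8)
The plan is to reduce the $T$-thread case to the two-thread analysis by a union bound over the $\binom{T}{2}$ pairs of threads, and then to reuse the smoothed collision bound of Theorem~\ref{theorem:sum_pt}. First I would fix an iteration $t$ and observe that, exactly as in the proof of Theorem~\ref{theorem:concurrent-two-threads}, at the start of iteration $t$ precisely $Tt$ edges have been inserted (each of the preceding $t$ iterations inserts $T$ edges), and every thread reads this common start-of-iteration forest before attempting its write. Consequently, whether the edges chosen by two specific threads collide is determined entirely by the data structure after $Tt$ insertions, so it is an instance of the event underlying $X_{Tt}$ in the sequential random process of Subsection~\ref{subsection:random_process_definition}. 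Insertions made by the other $T-2$ threads during the same iteration can only turn a chosen edge internal, which never creates a new collision, so they do not affect the upper bound.

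Next I would bound the expected contention within a single iteration. For a fixed pair of distinct threads $\{a,b\}$, each picks a uniformly random remaining edge, so the probability that their two edges collide equals $p_{Tt}$ when the two threads pick distinct edges (and is at most $p_{Tt}$ otherwise, since identical edges do not collide). Charging each contending pair as a separate write-contention event is a valid upper bound regardless of whether contention on a single location is counted linearly or quadratically in the number of contending threads, so by linearity of expectation over the $\binom{T}{2}$ pairs the expected number of contention events in iteration $t$ is at most $\binom{T}{2}\, p_{Tt}$.

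Finally I would sum over all iterations. Letting $t$ range over the (at most $|E|/T$) iterations, the total expected number of write contention events is at most $\binom{T}{2}\sum_{t} p_{Tt}$. Since the indices $Tt$ form a subsequence of the steps $0,1,\dots,|E|-1$ and every $p_s \ge 0$, we have $\sum_{t} p_{Tt} \le \sum_{s=0}^{|E|-1} p_s = \O(\log |V| \cdot \log |E|)$ by Theorem~\ref{theorem:sum_pt}. Combining this with $\binom{T}{2} = \O(T^2)$ yields the claimed bound $\O(T^2 \cdot \log |V| \cdot \log |E|)$.

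I expect the main obstacle to be the first step: carefully justifying that the per-pair collision event in the $T$-thread synchronous model coincides distributionally with the sequential event $X_{Tt}$, despite $T$ edges being inserted ``simultaneously'' in a single iteration. The three observations that make this go through are that collisions are defined relative to the start-of-iteration forest, that the inserted-edge set after $t$ iterations has the same distribution as after $Tt$ steps of the sequential process (treating threads as picking distinct edges, or noting that coincidences only lower the collision count), and that intra-iteration insertions by the remaining threads can only make an edge internal rather than generate an additional collision. Once this correspondence is established, the union bound over pairs and the reuse of Theorem~\ref{theorem:sum_pt} are routine.
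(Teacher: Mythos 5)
Your proposal is correct and follows essentially the same route as the paper: linearity of expectation over the $\Theta(T^2)$ thread pairs with per-pair collision probability $p_{Tt}$, then bounding the subsampled sum $\sum_t p_{Tt}$ by the full sum $\sum_s p_s$ and invoking Theorem~\ref{theorem:sum_pt}. The only difference is that you spell out the correspondence between the $T$-thread synchronous model and the sequential random process (intra-iteration insertions, coinciding picks), which the paper treats implicitly via its two-thread argument.
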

\begin{proof}
The number of possible pairwise contentions every iteration is $\Theta(T^2)$, so by linearity of expectation, the expected number of collisions at iteration $t$ is $\O(T^2 \cdot p_{T\cdot t})$. Then, the total number of contention events is bounded by $\sum\limits_{t=0}^{|E|/T} T^2 \cdot  p_{T\cdot t} \leq  \sum\limits_{t=0}^{|E|} T^2\cdot p_t = \O(T^2 \cdot \log |V| \cdot \log |E|)$.
\end{proof}

The last two results show that the number of write contention events depends on the size of the graph only polylogarithmically, and thus, may be dominated by other costs of processing a graph. This suggests fairly low contention cost for implementing Union-Find via most concurrency primitives (locks, atomic operations, transactions), which was also previously shown in practice~\cite{ConnectIt}. 

\subsection{Parallel Algorithm}
\label{subsection:parallel-model}

Our main goal is to design a parallel iterative algorithm for Union-Find that is work-efficient and internally deterministic. 
While work-efficient parallel algorithms are well-known for many problems, relatively few algorithms have the second property~\cite{InternallyDeterministicParallel,ParallelismInRandomized,ParallelRandomPermutation}. 
% Intuitively, internally deterministic parallel algorithms are the algorithms that always do the same work as the sequential ones. 
Internal determinism is particularly interesting for Union-Find, since running an internally-deterministic parallel Union-Find algorithm on a sorted sequence of edges is similar to Kruskal's algorithm~\cite{KruscalMST} and results in a minimal spanning tree. Blelloch et al.~\cite{InternallyDeterministicParallel} present a practical algorithm for Union-Find but without a theoretical analysis. Their algorithm is deterministic but \emph{not internally deterministic}, as its output (i.e., edges used for connectivity) depends on some predefined parameter and may diverge from the output of the sequential algorithm.

 We aim to close these gaps by providing a practically-efficient algorithm which is provably work-efficient, and \emph{internally deterministic}.  Specifically, the algorithm uses concurrent-reads concurrent-write (CRCW) PRAM with \emph{priority write} (see Subsection~\ref{subsection:preliminaries-parallel-model}).

\paragraph{Deterministic Reservations} 
We build on the algorithm of Blelloch et al.~\cite{InternallyDeterministicParallel}, which uses the \emph{deterministic reservations} approach. Specifically, in each iteration, their algorithm considers the prefix of remaining tasks (i.e. edges) of size $S$ and then proceeds in two phases. In the first phase, these tasks do \emph{reservations} of memory locations they want to change by using \emph{priority write}. In the second phase, all tasks in the prefix that succeeded in their reservations are executed. For the Union-Find problem, in the first phase tasks reserve the smaller adjacent connected component for each edge, by making a priority write in its root and then link roots of these smaller components to other components in the case of successful reservation. This behaviour follows our definition of collision: the algorithm adds all edges in the prefix that do not collide with any preceding edges.

There are several challenges when making this algorithm  both work-efficient and internally deterministic. First, Blelloch et al. do not analyze how often collisions happen and how many iterations the algorithm requires. 
Furthermore, parallelism in their algorithm violates linking strategy properties, which may make their algorithm not work-efficient. For example, on a directed path, their algorithm can add all edges in one iteration, but the depth of the resulting Union-Find forest will be linear. Last but not least, the algorithm is not internally deterministic: there are cases where the sequential algorithm uses some edge for connectivity (i.e., $\texttt{union}(u,v)$ returns \texttt{true}), but this algorithm replaces it with an edge located later in the sequence. Blelloch et al. suggest a modification of their algorithm to solve Spanning Tree, which does make it internally deterministic but it requires each edge to reserve \emph{both} of its endpoints. As a result, for random or star graphs, few operations will succeed in every iteration, since most operations will try to reserve the largest connected component. 

\begin{figure}[ht]
\centering
\includegraphics[width=0.6\linewidth]{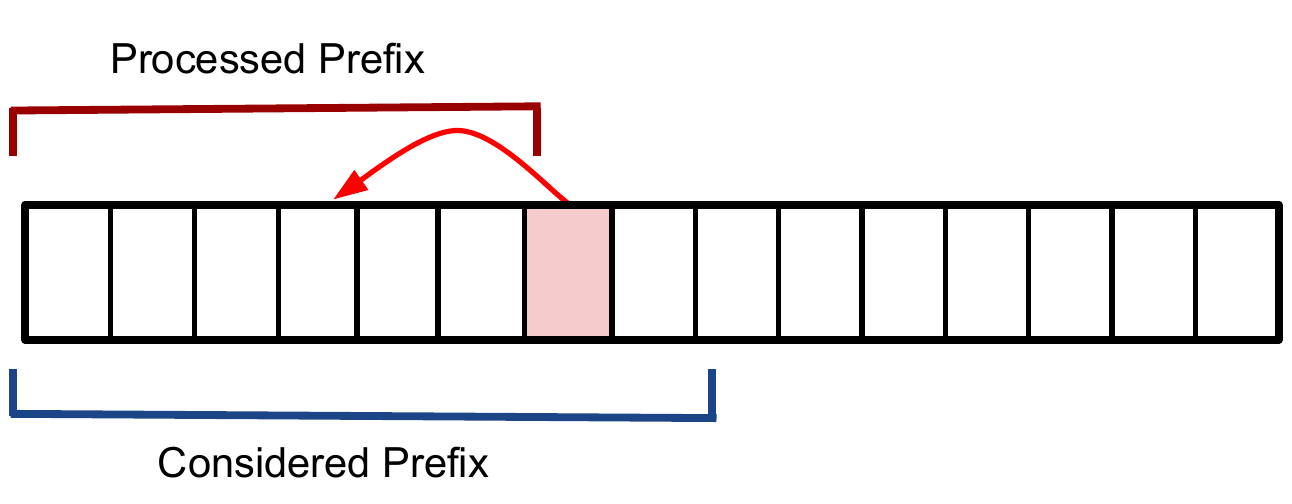}
\caption{An example how our parallel algorithm processes edges. It scans some prefix of the remaining edge sequence and does deterministic reservations. If there is a collision, i.e., for some edge, deterministic reservation failed because of another edge, the algorithm processes not all considered prefix, but only its part before the first collision.}
\label{figure:parallel-model}
\end{figure}

\paragraph{Internally-Deterministic Algorithm} 
One key difference between the algorithm of Blelloch et al. and the one we are analyzing is that we do not allow the execution of a task unless all tasks before it will be executed by the end of the current iteration (Figure~\ref{figure:parallel-model}). In other words, we execute all tasks in the prefix until a task with failed reservation, i.e., an edge with a collision with one of the previous edges. This modification addresses the latter problem in the algorithm of Blelloch et al. by making it internally deterministic. 
This appears to make the algorithm less efficient; yet, we will prove that collisions are relatively rare, so the modification does not have significant impact on  performance. Similarly to other iterative algorithms~\cite{ParallelismInRandomized,ParallelRandomPermutation}, we assume that the order of edges is uniformly random.

\begin{figureAsListing}
\begin{lstlisting}
void make_reservations(edges, l, r):
  parallel for t <@$\in$@> [l, r):
    (u, v) := edges[t] // alias for edge endpoints
    <@\label{listing:window:find1}@>u_root = find_root(u)
    <@\label{listing:window:find2}@>v_root = find_root(v)
    if u_root == v_root:
      continue // already in the same component
    if u_root < v_root: // linking strategy
      write_min(u_root.reservation, t) 
    else:
      write_min(v_root.reservation, t)

void union_find(edges):
  i := 0
  while i < |edges|:
    <@\label{listing:window:reservations}@>make_reservations(edges, i, i + S):
    <@\label{listing:window:stop}@>stop := min edge id with failed reservation or i + S 
    sz := stop - i // max size without collisions
    <@\label{listing:window:link}@>Parallel-Link-All(i, i + sz) 
    i += sz
\end{lstlisting}
\caption{Parallel algorithm for Union-Find. In each iteration, a prefix of size $S$ is being processed in parallel unless there is a write collision among its tasks. \texttt{Parallel-Link-All} was presented in the work-efficient Union-Find~\cite{WorkEfficientUnionFind}. It groups roots according to connected components and then unites them via a recursive divide-and-conquer strategy.}
\label{listing:window-union-find}
\end{figureAsListing}

Our parallel algorithm is presented in Listing~\ref{listing:window-union-find}. It starts the same way as the algorithm of Blelloch et al.~\cite{InternallyDeterministicParallel} by making \emph{deterministic reservations} in the unprocessed prefix of size $S$ (Line~\ref{listing:window:reservations}). Specifically, the algorithm finds the roots of the corresponding trees for each edge and tries to reserve the smaller one according to the used linking strategy. Then, it checks in parallel whether all reservations are successful and, if not, it finds the first unsuccessful reservation using \texttt{write\_min} instruction (Line~\ref{listing:window:stop}). Finally, the algorithm completes all edge additions in the prefix until the first failed reservation (Line~\ref{listing:window:link}). In the simplest case, this means just linking the root that was designated as ``smaller'' in the previous step to the larger one. However, we will propose another approach that maintains all the guarantees of the linking techniques on forest depth. This process is repeated until all edges are processed.

\paragraph{Asymptotic Optimizations} A naive implementation would spend $\O(S \cdot depth)$ total work and has $\O(depth)$ span for each iteration, where $depth$ is the Union-Find forest depth; due to parallelism, even with linking techniques, the forest depth potentially can be up to linear in the simplest algorithm. However, incremental improvements can address this and improve total work to an optimal $\O(S \cdot \alpha(|V|))$ while keeping the span polylogarithmic, where $\alpha(\cdot)$ is the inverse Ackermann function. Both adjustments were first proposed by Simsiri et al.~\cite{WorkEfficientUnionFind} in their work-efficient batched parallel Union-Find algorithm. First of all, \texttt{Parallel-Link-All} can preserve forest depth bounds if a linear-work connected components parallel algorithm is used to group all connected components that will form one component~\cite{WorkEfficientUnionFind}. Then, all grouped connected components can be merged in parallel by applying a recursive divide-and-conquer approach. In particular, when merging two Union-Find trees, any linking strategy can be used, since all information about these trees has been calculated recursively. This \texttt{Parallel-Link-All} takes linear work and polylogarithmic span. The second optimization ensures that parallelism in path compaction does not increase total work. Specifically, in lines~\ref{listing:window:find1} and \ref{listing:window:find2} the same bulk-parallel approach as in~\cite{WorkEfficientUnionFind} is employed. Roughly, the algorithm acts like a parallel BFS (Breadth-First Search) that starts in vertices,  for which we want to find the root, and ascends wave-by-wave until it finds all roots. Then it traverses all nodes again and re-links them directly to the roots. These adjustments allow Union-Find to use both path compaction and linking strategies, and as a result, the average work per edge is the optimal $\O(\alpha(|V|))$. 

\subsection{Algorithm Analysis}

\begin{theorem}
The Union-Find algorithm in Listing~\ref{listing:window-union-find} is internally deterministic.
\end{theorem}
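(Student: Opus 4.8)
The plan is to show that the parallel algorithm reproduces, edge by edge, the \emph{effective} behaviour of the sequential algorithm on the same edge ordering: the subsequence of edges for which \texttt{union} returns \texttt{true} (the edges actually used for connectivity) and the evolving partition of $V$ into connected components are identical in both executions. The outcome of an edge $(u,v)$ is governed solely by the test \texttt{u\_root == v\_root}, i.e.\ by whether $u$ and $v$ already lie in the same component, so it depends only on the current partition and not on the concrete tree shape. I would therefore induct over the iterations of the \texttt{while} loop in Listing~\ref{listing:window-union-find}, with the invariant that at the start of the iteration whose window begins at index $i$, the current partition equals the partition the sequential algorithm holds after inserting the first $i$ edges. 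Because $i$ advances by exactly $\mathtt{sz} = \mathtt{stop} - i$ and the processed range is the \emph{contiguous prefix} $[i, \mathtt{stop})$, the algorithm never skips an edge: every edge is evaluated only after all strictly lower-indexed edges have been fully processed, exactly as in the sequential run.

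The core of the proof is a structural lemma about a single window. Fix an iteration starting at $i$ and let $\mathtt{stop}$ be the index computed on Line~\ref{listing:window:stop}; the algorithm processes exactly $[i, \mathtt{stop})$, every edge of which reserved successfully. Each such edge connects two \emph{distinct} snapshot components, since same-component edges are discarded by the \texttt{u\_root == v\_root} test during reservation. I claim the processed edges form a \emph{forest} on the super-graph of components. Suppose not, and let some of them close a cycle $c_1, c_2, \ldots, c_L, c_1$ on components. Let $c_m$ be the cycle vertex that is \emph{minimal} in the total order used by the linking strategy. Then $c_m$ is the smaller component of \emph{both} cycle edges incident to it, so both issue \texttt{write\_min} into the reservation slot of $c_m$; only the lower-indexed edge keeps its own index, so the other has a failed reservation, contradicting that all of $[i, \mathtt{stop})$ reserved successfully. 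Hence the processed edges are acyclic on components.

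The forest property makes the window internally deterministic. A forest has no redundant edge, so processing $[i, \mathtt{stop})$ in \emph{any} order --- in particular the sequential one --- causes each of these edges to join two different components and return \texttt{true}, and the resulting merged partition equals the union of the forest's components, independent of order; it thus coincides with the sequential partition after the first $\mathtt{stop}$ edges. The first failing edge (index $\mathtt{stop}$) and the rest of the window are withheld and re-examined in the next iteration; since the algorithm fully completes $[i, \mathtt{stop})$ before advancing, the withheld edge is then evaluated against precisely the partition that the sequential run presents to it immediately after edge $\mathtt{stop}-1$, so its \texttt{union} result agrees. This is exactly the guarantee that the ``stop at the first failed reservation'' rule --- absent from Blelloch et al.~\cite{InternallyDeterministicParallel} --- is designed to provide, and it re-establishes the invariant. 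Chaining over all windows yields agreement of the two executions on every edge, which is internal determinism.

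The step I expect to be the main obstacle is cleanly separating the \emph{connectivity} guarantees, which the forest lemma controls, from the concrete contents of the Union-Find forest, which \texttt{Parallel-Link-All} and the bulk path compaction on Lines~\ref{listing:window:find1}--\ref{listing:window:link} may populate differently from a replayed sequential merge. I would dispatch this by noting that internal determinism, as used here, concerns only the subsequence of connectivity-changing operations and the sequence of partitions, both of which are functions of the partition alone; since only the \emph{equality} of representatives --- not their identity --- feeds back into the control flow, differing tree shapes are immaterial to the claim. A secondary point to verify carefully is that the \texttt{write\_min} detection on Line~\ref{listing:window:stop} indeed returns the \emph{minimum} index among all failed reservations, so that no edge preceding a genuine collision is ever withheld and no edge at or after it is applied early.
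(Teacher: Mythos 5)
Your proposal is correct, and it reaches the conclusion by a genuinely different route than the paper. The paper argues by contradiction on the \emph{first edge where the two executions differ}, splitting into two cases: an edge taken by the parallel algorithm but not the sequential one would create a cycle of links, which is ruled out because each root acquires at most one outgoing link per iteration and every link respects the linking strategy's total order; and an edge taken sequentially but skipped in parallel would require its endpoints to be already connected, contradicting that no earlier difference occurred. You instead run a forward induction with the invariant that the partition at the start of each window equals the sequential partition after the first $i$ edges, and your key lemma --- that the edges of $[i,\mathtt{stop})$ form a forest on snapshot components --- is proved by a different mechanism: the minimal component of a hypothetical cycle would receive \texttt{write\_min} reservations from \emph{both} incident cycle edges, forcing a failed reservation inside $[i,\mathtt{stop})$, contradicting the definition of \texttt{stop} on Line~\ref{listing:window:stop}. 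This has two advantages over the paper's Case~1: it argues at the level of the reservation phase and the partition, so it is insensitive to how \texttt{Parallel-Link-All} actually orients links (the paper notes that its divide-and-conquer merge may use any internal linking, which makes an argument about link directions in the resulting forest somewhat delicate), and it makes explicit the order-independence of merging a forest of components, which the paper leaves implicit. What the paper's argument buys in exchange is brevity: fixing the first differing edge collapses the induction and the order-independence step into two short case analyses. Your closing observations --- that tree shapes are immaterial because only equality of representatives feeds into control flow, and that \texttt{stop} must be the \emph{minimum} failed index --- are exactly the right points to pin down, and both hold for Listing~\ref{listing:window-union-find}.
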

\begin{proof}
We proceed by contradiction. Assume that the sequential algorithm uses a different set of edges to form connected components. Consider the first edge $(u, v)$ for which the result differs. It was either taken by the presented algorithm and not taken by the sequential one or vice versa.

\emph{Case 1}: This edge was used by our algorithm but not by the sequential one. The fact that it was not used by the sequential algorithm means that it, together with the previous edges, forms a cycle. As it was the first difference, our algorithm acted incorrectly and created a cycle in the Union-Find forest. However, linking techniques impose a total order on all connected components and the algorithm links smaller roots to larger ones. The cycle should have all edges directed in the same direction since in one iteration only one outgoing edge can be added for each connected component. This means that obtaining a cycle is impossible, because in a directed cycle at least one edge (link) contradicts the total order at the start of iteration and our algorithm does not allow this.

\emph{Case 2}: $(u, v)$ was used by the sequential algorithm and not by our algorithm. Consider the iteration when the edge was processed. We know that, at each iteration, all edges inside the processed prefix either tried to make a reservation and succeeded, or did not try because the vertices are already connected. We know the former is not true, because by assumption, $(u,v)$ was not used, so the latter should be true, which means that $u$ and $v$ were already connected before the iteration. However, we know that the edges added before the iteration cannot connect $u$ and $v$, otherwise the sequential algorithm would make a cycle. We have obtained a contradiction. 
\end{proof}

After proving that our algorithm is indeed internally-deterministic, what is left is to bound the number of its iterations. 

\begin{theorem} \label{theorem:window-iterations}
The expected number of iterations of the Union-Find algorithm in Listing~\ref{listing:window-union-find} is $\left\lceil \frac {|E|} {S} \right\rceil + \O(S^2 \cdot \log |V| \cdot \log |E|)$, if the order of edges is uniformly random.
\end{theorem}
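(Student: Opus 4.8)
The plan is to split the (random) iteration count $N$ into a collision-free baseline $\lceil |E|/S\rceil$ and an overhead term driven by collisions, and then bound the overhead using Theorem~\ref{theorem:sum_pt}. Call an iteration \emph{full} if it advances by the maximal $S$ edges, and \emph{short} otherwise; a short iteration is either \emph{collision-cut} (its window contained a failed reservation, i.e.\ an internal collision) or is the single final iteration that merely exhausts the remaining edges. Let $H$ be the number of collision-cut iterations. I would first establish the deterministic bound $N \le \lceil |E|/S\rceil + H$, and then the main estimate $\E[H] = \O(S^2\log|V|\log|E|)$, after which the claim follows by taking expectations.

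For the deterministic bound I would use a wasted-capacity argument. Each iteration processes $s_k$ edges with $1 \le s_k \le S$ and $\sum_k s_k = |E|$, so the total unused capacity is $NS - |E| = \sum_k (S - s_k)$. Only short iterations contribute, each at most $S-1$, and there are at most $H+1$ of them (the collision-cut ones plus the final exhausting iteration). Hence $NS - |E| \le (H+1)(S-1) < (H+1)S$, which rearranges to $N < |E|/S + H + 1$; as $N$ is an integer, $N \le \lceil |E|/S\rceil + H$.

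The core is bounding $\E[H]$, and here the main obstacle is that, unlike the fixed-stride setting of Theorem~\ref{theorem:concurrent-T-threads} (where iteration $t$ corresponds exactly to sequential step $T\cdot t$), the window boundaries sit at \emph{data-dependent, random} positions in the edge order, since a collision-cut shifts all subsequent windows. I would circumvent this by over-counting so as to decouple the bound from the random boundaries. Every collision-cut iteration occurs at a distinct window-start position, and for an actual window start the true window coincides with the \emph{hypothetical} size-$S$ window $\{\varepsilon_{t+1},\dots,\varepsilon_{\min(t+S,|E|)}\}$ evaluated against the state $E_t$. Dropping the indicator that $t$ is a genuine window start only inflates the count, so $H$ is at most the number of positions $t\in\{0,\dots,|E|-1\}$ whose hypothetical window carries an internal collision. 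A union bound over the at most $\binom{S}{2}$ pairs inside each such window then yields $H \le \sum_{t}\sum_{t<a<c\le \min(t+S,|E|)} \mathds{1}[\varepsilon_a,\varepsilon_c \text{ collide at } E_t]$.

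Finally I would take expectations pair by pair. For fixed positions $a,c>t$, conditioning on the set $E_t$ makes $(\varepsilon_a,\varepsilon_c)$ a uniformly random ordered pair of distinct active edges, so by Lemma~\ref{lemma:pt-equals} the conditional collision probability equals $\E[X_t \mid \varepsilon_1,\dots,\varepsilon_t]$; averaging over $E_t$ gives exactly $p_t$ for every such pair. Summing the $\le\binom{S}{2}$ pairs per position and over all $t$ gives $\E[H] \le \binom{S}{2}\sum_{t=0}^{|E|-1} p_t$, and Theorem~\ref{theorem:sum_pt} bounds this sum by $\O(\log|V|\log|E|)$, hence $\E[H]=\O(S^2\log|V|\log|E|)$. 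Combined with $N\le\lceil|E|/S\rceil+H$, taking expectations finishes the proof. The delicate point to get right in the write-up is the soundness of passing to hypothetical windows at \emph{every} position, namely that collision-cut iterations occupy distinct positions and that each real window agrees with its hypothetical counterpart; once that is justified, the remaining pair-counting is routine.
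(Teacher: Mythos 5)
Your proposal is correct and takes essentially the same route as the paper: your ``hypothetical size-$S$ window at every position $t$, evaluated against $E_t$'' is precisely the paper's toy algorithm (Listing~\ref{listing:toy-window-union-find}), which inserts edges one by one while counting pairwise collisions in every window $[i, i+S)$, dominates the real algorithm's collision count because genuine window starts are a subset of all positions, and is then bounded pair-by-pair via $p_t$ and Theorem~\ref{theorem:sum_pt} exactly as you do. The only difference is cosmetic: your wasted-capacity derivation of $N \le \left\lceil \frac{|E|}{S} \right\rceil + H$ makes explicit what the paper states informally (``every collision can cause at most one extra iteration''), and you use $\binom{S}{2}$ where the paper uses $S^2$.
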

\begin{proof}
It is easy to see that the minimum number of required iterations in absence of collisions is $\left\lceil \frac {|E|} {S} \right\rceil$ , since at most $S$ edges are processed in each iteration. Every collision in the prefix can cause the algorithm to perform at most one more iteration. So, to bound the number of ``extra'' iterations we can bound the total number of collisions across all considered prefixes of unprocessed edges.

Let $k$ be an arbitrary index in the edge sequence. We want to bound the probability of collision inside the window denoted by positions $[k,k+S)$ if all edges before $k$ were inserted in the data structure. For this reason, we define $Y_k$ to be an indicator random variable for the event ``there is at least one collision inside window [k, k+S] if all edges before $k$ were inserted in the data structure''.
Since the edge sequence is random, the inserted edges and edges inside this window are random as well. For two fixed positions in the window the probability of their collision then is exactly $p_k = \E \left[\E [X_k ~|~ \varepsilon_1, \ldots, \varepsilon_k]\right] $ (see Subsection~\ref{subsection:random_process_definition}). By the union bound, the probability of a collision in the window $\E \left[\E [Y_k ~|~ \varepsilon_1, \ldots, \varepsilon_k]\right]$ is then $\leq S^2 \cdot p_k$, since there are less than $S^2$ pairs of possibly colliding positions in the window.

Let $I$ be the set of unprocessed prefix start positions (i.e., the set of all observed values of \texttt{i} in Listing~\ref{listing:window-union-find}). Similarly to the previous paragraph, we would want to say that the total expected number is bounded by $\sum\limits_{i \in I} S^2 \cdot p_i$. However, the content of $I$ has a correlation with the sequence of edges. For instance, if in some iteration there was a collision, we know that the unprocessed prefix of the next iteration starts with an edge that had a collision with one of the edges before. This means that the order of edges in the unprocessed tail is not uniformly random anymore and we cannot use the definitions from Subsection~\ref{subsection:random_process_definition}.

This issue has two possible solutions. The first one requires a small modification to Listing~\ref{listing:window-union-find}: instead of processing the window $[i, stop)$, the algorithm will process this window and then will separately process the task at position $stop$. This modification does not asymptotically increase the total work or the span of an iteration. However, the separately processed edge at position $stop$ removes the correlation between the order of remaining edges and the start position of the next prefix. This is because the edges strictly after position $stop$ do not affect the current iteration, and thus, all possible permutations of the tail continue to be equiprobable. 

The second solution does not require modifying Listing~\ref{listing:window-union-find} but complicates the analysis. Consider the toy algorithm in Listing~\ref{listing:toy-window-union-find}. It adds edges one by one and at the same time counts the number of pairwise collisions in all possible windows of size $S$. Note that this do not have the same problem as the algorithm in Listing~\ref{listing:window-union-find} and all permutations of the tail of edges continue to be equiprobable, since the algorithm tries \emph{all} windows of size $S$. This allows us to say that the probability of a collision in window $i$ is $\E [Y_i]$, and thus, by the union bound, the total expected number of collisions is $\E \left[ \sum\limits_{i=0}^{|E|-1} Y_i \right] \leq \sum\limits_{i = 0}^{|E|-1} S^2 \cdot p_t = \O(S^2 \cdot \log |V| \cdot \log |E|)$. The last equality is Theorem~\ref{theorem:sum_pt}. Finally, the last observation we have to make is that the number of collisions counted by the algorithm in Listing~\ref{listing:toy-window-union-find} is not less than the number of collisions that occurred in the algorithm in Listing~\ref{listing:window-union-find} due to the fact that all windows considered in the latter algorithm were also considered by the former algorithm.

\begin{figureAsListing}
\begin{lstlisting}
int calculate_collisions(edges):
  i := 0
  collisions := 0
  while i < |edges|:
    collision += #pairwise collisions in [i, i + S)
    Union-Find.union(edges[i]) // insert only one edge
    i += 1
  return collisions
\end{lstlisting}
\caption{Toy algorithm that helps to prove Theorem~\ref{theorem:window-iterations}. It calculates the number of collisions across all possible windows of size $S$.}
\label{listing:toy-window-union-find}
\end{figureAsListing}

Both approaches show that only $\sum\limits_{i = 0}^{|E|-1} S^2 \cdot p_t = \O(S^2 \cdot \log |V| \cdot \log |E|)$ expected collisions happen in the algorithm, and consequently, the number of extra iterations is $\O(S^2 \cdot \log |V| \cdot \log |E|)$.

\end{proof}

\begin{corollary} \label{corollary:window-work-efficient}
The proposed Union-Find algorithm is work-efficient in expectation when $S = \O(|E|^{\frac 1 3 - \varepsilon})$ for any constant $\varepsilon > 0$. 
\end{corollary}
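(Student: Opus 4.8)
The plan is to bound the total work as the expected number of iterations times the work performed per iteration, and then check that the resulting parallel overhead is dominated by the optimal sequential cost $\O(|E| \cdot \alpha(|V|))$ under the stated bound on $S$. Let $N$ denote the (random) number of iterations. By Theorem~\ref{theorem:window-iterations}, $\E[N] = \lceil |E| / S \rceil + \O(S^2 \cdot \log |V| \cdot \log |E|)$. By the \emph{Asymptotic Optimizations} described above (the linear-work \texttt{Parallel-Link-All} together with the bulk-parallel path compaction of~\cite{WorkEfficientUnionFind}), each iteration processes its window of at most $S$ edges using $\O(S \cdot \alpha(|V|))$ total work, where this per-iteration bound is independent of the random ordering. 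Since the optimal sequential Union-Find cost is $\O(|E| \cdot \alpha(|V|))$, proving work-efficiency amounts to showing that the full expected work matches this.

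Carrying out the arithmetic, the total work is at most $N$ times the per-iteration work, so by linearity of expectation the expected total work is
\begin{gather*}
    \O\!\left( \left( \frac{|E|}{S} + S^2 \cdot \log |V| \cdot \log |E| \right) \cdot S \cdot \alpha(|V|) \right) = \O\!\left( |E| \cdot \alpha(|V|) + S^3 \cdot \alpha(|V|) \cdot \log |V| \cdot \log |E| \right).
\end{gather*}
The first term is exactly the optimal sequential work, so it suffices to show the second term is $\O(|E| \cdot \alpha(|V|))$, i.e. that $S^3 \cdot \log |V| \cdot \log |E| = \O(|E|)$. Substituting $S = \O(|E|^{\frac{1}{3} - \varepsilon})$ yields $S^3 = \O(|E|^{1 - 3\varepsilon})$, so relative to the first term the overhead carries an extra factor of order $|E|^{-3\varepsilon} \cdot \log |V| \cdot \log |E|$. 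Since for connected inputs $\log |V| \cdot \log |E| = \O(\log^2 |E|)$ is polylogarithmic while $|E|^{3\varepsilon}$ grows polynomially for any fixed constant $\varepsilon > 0$, this factor is $o(1)$, the overhead term is dominated, and the expected total work is $\O(|E| \cdot \alpha(|V|))$.

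The step I expect to require the most care is justifying the $\O(S \cdot \alpha(|V|))$ per-iteration cost in the presence of \emph{wasted} work: after a collision, the root-finds performed on the window suffix past the first failed reservation are discarded and recomputed in the next iteration, so the same vertices may be queried several times. The classical amortized Union-Find bound must therefore be applied to the full multiset of executed find/link operations (whose expected size is $\O(\E[N] \cdot S)$), not to the $|E|$ genuine \texttt{unite} operations alone, and it is precisely the bulk-parallel path-compaction scheme of~\cite{WorkEfficientUnionFind} that preserves the amortized $\O(\alpha(|V|))$ guarantee here, letting us factor $\alpha(|V|)$ out of the iteration count. A secondary subtlety is that $N$ is a random variable correlated with the edge ordering; this is handled by noting that the per-iteration instruction count is bounded above deterministically, so only $\E[N]$ enters the final estimate via linearity of expectation.
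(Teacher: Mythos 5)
Your proof is correct and follows essentially the same route as the paper's: multiply the expected iteration count from Theorem~\ref{theorem:window-iterations} by the $\O(S \cdot \alpha(|V|))$ per-iteration work, then check that the $S^3$ overhead term is dominated once $S = \O(|E|^{\frac 1 3 - \varepsilon})$. Your added remarks (keeping $\alpha(|V|)$ explicit in the overhead term, and justifying the product of an expected iteration count with a deterministic per-iteration bound) are careful refinements of, not departures from, the paper's argument.
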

\begin{proof}
We know that one iteration takes $\O(S \cdot \alpha(|V|))$ total work. As a result, the expected total work is the expected number of iterations multiplied by the iteration work, which is $\O(S \cdot \alpha(|V|) \cdot (\frac {|E|} S + S^2 \cdot \log |V| \cdot \log |E|)) = \O(|E| \cdot \alpha(|V|) + S^3 \cdot \log^3 |E|) = \O\left(|E| \cdot \alpha(|V|) + |E| \cdot \frac {\log^3 |E|} {|E|^{3\varepsilon}}\right) = \O(|E| \cdot \alpha(|V|))$. %The last transition is the result of asymptotic inequality $\log^3 |E| = o(|E|^{3 \varepsilon})$.
\end{proof}

\begin{corollary} \label{corollary:window-iterations}
The number of iterations for the presented Union-Find algorithm is $\O(|E|^{\frac 2 3} \cdot \log^{\frac 1 3} |E| \cdot \log^{\frac 1 3} |V|)$ in expectation when $S$ is chosen optimally. The expected parallel depth (span) of the algorithm is then $\O(|E|^{\frac 2 3} \cdot \emph{polylog}(|E|))$.
\end{corollary}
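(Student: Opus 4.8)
The plan is to treat Theorem~\ref{theorem:window-iterations} as giving the expected iteration count as a function of the free parameter $S$, namely (up to the ceiling and hidden constants) $f(S) = \frac{|E|}{S} + S^2 \cdot \log|V| \cdot \log|E|$, and then to minimize $f$ over $S$. Since the first summand is decreasing in $S$ and the second is increasing, the minimum is attained by balancing them: solving $\frac{|E|}{S} = S^2 \cdot \log|V| \cdot \log|E|$ yields $S^3 = \frac{|E|}{\log|V| \cdot \log|E|}$, so the optimal window size is $S = \Theta\big( |E|^{1/3} \cdot (\log|V| \cdot \log|E|)^{-1/3} \big)$.

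Next I would substitute this value back into $f(S)$. At the balance point both summands coincide and equal $|E|^{2/3} \cdot \log^{1/3}|V| \cdot \log^{1/3}|E|$, so the expected number of iterations is $\O(|E|^{2/3} \cdot \log^{1/3}|E| \cdot \log^{1/3}|V|)$, which is exactly the first claim. I would also check that work-efficiency is preserved at this $S$: the overhead term in the work bound of Corollary~\ref{corollary:window-work-efficient} is $S^3 \cdot \log|V| \cdot \log|E|$, which at the balance point is $\Theta(|E|)$, so the total work is $\alpha(|V|) \cdot \O(|E|) = \O(|E| \cdot \alpha(|V|))$; hence choosing $S$ optimally for the iteration count does not cost work-efficiency.

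For the span I would appeal to the Asymptotic Optimizations paragraph: with the bulk-parallel path-compacting \texttt{find\_root} and the \texttt{Parallel-Link-All} routine of~\cite{WorkEfficientUnionFind}, every iteration runs in $\text{polylog}(|E|)$ depth, since the reservation phase, the parallel \texttt{write\_min} that locates the first failed reservation, and the linking phase are each polylogarithmic-depth. As iterations are executed sequentially, the total span is the iteration count times the per-iteration depth, i.e. $\O(|E|^{2/3} \cdot \log^{1/3}|E| \cdot \log^{1/3}|V|) \cdot \text{polylog}(|E|) = \O(|E|^{2/3} \cdot \text{polylog}(|E|))$, giving the second claim.

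The optimization itself is routine AM--GM-style balancing, so I do not expect it to be the obstacle. The real content, and what I would be most careful about, is the span accounting: the clean $\text{polylog}(|E|)$ depth per iteration rests entirely on the correctness and depth guarantees of the \texttt{Parallel-Link-All} and bulk-parallel find subroutines imported from~\cite{WorkEfficientUnionFind}, including that \texttt{Parallel-Link-All} preserves the $\O(\log|V|)$ forest-depth bound that all of the preceding collision analysis silently assumes. Verifying that these imported guarantees compose correctly with the windowing and reservation structure of Listing~\ref{listing:window-union-find} is the step I expect to require the most care.
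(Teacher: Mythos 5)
Your proposal matches the paper's proof: the paper likewise minimizes the bound of Theorem~\ref{theorem:window-iterations} by substituting $S = \sqrt[3]{|E| / (\log |V| \cdot \log |E|)}$, and obtains the span as the product of the iteration count and the polylogarithmic per-iteration depth inherited from the subroutines of~\cite{WorkEfficientUnionFind}. Your explicit balancing of the two terms and the side check that this choice of $S$ preserves work-efficiency are just fleshed-out versions of what the paper leaves as ``easy to prove by substitution,'' so the approaches are essentially identical.
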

\begin{proof}
It is easy to prove by substitution that the asymptotic minimum of the equation in Theorem~\ref{theorem:window-iterations} is achieved when $S = \sqrt[3]{\frac {|E|} {\log |V| \cdot \log |E|}}$. The span directly follows from the bound on the number of iterations and the polylogarithmic span of one iteration.
\end{proof}

In addition, we observe that when the prefix size is $S$, up to $S$ processes can be leveraged by the algorithm. When combined with Corollary~\ref{corollary:window-work-efficient}, we get the following theorem:

\begin{theorem} \label{theorem:main-parallel-algorithm}
There exists an internally-deterministic parallel Union-Find algorithm for CRCW PRAM model with priority write that has $\O(|E| \cdot \alpha(|V|))$ expected total work on a randomly shuffled sequence of edges and scales perfectly for up to $\O(|E|^{\frac 1 3 - \varepsilon})$ processes.
\end{theorem}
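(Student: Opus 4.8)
The plan is to assemble this theorem from the three ingredients already established for the algorithm of Listing~\ref{listing:window-union-find}, and then to convert the work/span bounds into a statement about processor scaling via the standard work-depth scheduling principle. The three ingredients are: (i) internal determinism, which is exactly the dedicated theorem proved above via the cycle/contradiction argument over the linking order; (ii) the expected work bound $\O(|E|\cdot\alpha(|V|))$; and (iii) the span bound $\O(|E|^{\frac 2 3}\cdot\mathrm{polylog}(|E|))$. For (ii) and (iii) I would fix the window size $S$ at its optimal value $S=\sqrt[3]{|E|/(\log|V|\cdot\log|E|)}$ and invoke Corollary~\ref{corollary:window-work-efficient} and Corollary~\ref{corollary:window-iterations} respectively. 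A point worth checking explicitly is that this single choice of $S$ delivers both bounds at once: at the optimal $S$ the overhead term $\O(S^3\cdot\alpha(|V|)\cdot\log|V|\cdot\log|E|)$ arising in the work calculation matches the main term $\O(|E|\cdot\alpha(|V|))$ (both being $\Theta(|E|\cdot\alpha(|V|))$), so work-efficiency is preserved up to constants, while the iteration count of Theorem~\ref{theorem:window-iterations} is simultaneously minimized to $\O(|E|^{\frac 2 3}\cdot\mathrm{polylog}(|E|))$, which together with the polylogarithmic per-iteration span yields the claimed span.

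Next I would turn these into the scaling claim. By the observation that each window of size $S$ can usefully employ up to $S$ processors, and since the optimal $S=\Theta(|E|^{\frac 1 3}/\mathrm{polylog}(|E|))$ already exceeds $|E|^{\frac 1 3-\varepsilon}$ for large $|E|$, any $P=\O(|E|^{\frac 1 3-\varepsilon})$ processors can indeed be leveraged per iteration. Applying the work-depth (Brent-type) scheduling bound, the running time on $P$ processors is $T_P=\O(W/P+D)$ with $W=\O(|E|\cdot\alpha(|V|))$ and $D=\O(|E|^{\frac 2 3}\cdot\mathrm{polylog}(|E|))$. Perfect (linear) scaling means $T_P=\O(W/P)$, which holds precisely when $W/P=\Omega(D)$, i.e. when $P=\O(W/D)$, and here $W/D=\Omega\!\left(|E|^{\frac 1 3}\cdot\alpha(|V|)/\mathrm{polylog}(|E|)\right)$.

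I would then close the argument by verifying $|E|^{\frac 1 3-\varepsilon}=\O(W/D)$. After dividing, this amounts to $|E|^{-\varepsilon}\cdot\mathrm{polylog}(|E|)=\O(\alpha(|V|))$, which holds for every fixed $\varepsilon>0$ and all sufficiently large $|E|$, because the polynomial saving $|E|^{-\varepsilon}$ dominates any polylogarithmic factor and hence $|E|^{-\varepsilon}\cdot\mathrm{polylog}(|E|)=\O(1)=\O(\alpha(|V|))$ using $\alpha(|V|)\geq 1$. Consequently, for $P=\O(|E|^{\frac 1 3-\varepsilon})$ the span term $D$ is absorbed into $W/P$, giving $T_P=\O(W/P)$ and perfect speedup, which is exactly the assertion of the theorem.

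I expect the main obstacle to be the bookkeeping around the window size $S$: one must confirm that a single $S$ simultaneously certifies work-efficiency (so that the second, ``overhead'' term of Theorem~\ref{theorem:window-iterations} does not inflate the total work beyond $\O(|E|\cdot\alpha(|V|))$) and the minimized span, and that the CRCW-with-priority-write primitives used within each iteration---deterministic reservations via \texttt{write\_min}, \texttt{Parallel-Link-All}, and the bulk-parallel path compaction---each contribute only polylogarithmic span, so that the per-iteration work $\O(S\cdot\alpha(|V|))$ can genuinely be spread across $S$ processors. Once these per-iteration guarantees are in place, the $|E|^{\varepsilon}$ slack in the processor count cleanly swallows the polylogarithmic span overhead, and the scaling claim follows.
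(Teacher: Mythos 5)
Your proposal is correct, and at the top level it assembles the theorem from the same ingredients as the paper (the internal-determinism theorem, Corollary~\ref{corollary:window-work-efficient}, and Corollary~\ref{corollary:window-iterations}); but the route you take to the scaling claim is genuinely more careful than the paper's, and the difference matters. The paper's own proof is a one-liner: it observes that a window of size $S$ can employ up to $S$ processes and cites Corollary~\ref{corollary:window-work-efficient}, implicitly tying the window size to the processor count $S=\Theta(P)=\O(|E|^{\frac13-\varepsilon})$. Read literally, that choice does not yield \emph{perfect} scaling: with $S=\Theta(P)$ every iteration still pays a polylogarithmic span, so the runtime is $\O\bigl((|E|/S)\cdot\mathrm{polylog}(|E|)\bigr)$, which exceeds $W/P=\O(|E|\cdot\alpha(|V|)/P)$ by a factor of $\mathrm{polylog}(|E|)/\alpha(|V|)$. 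Your proof decouples the two parameters --- window size fixed at the span-optimal $S=\Theta\bigl((|E|/(\log|V|\log|E|))^{1/3}\bigr)$, processor count $P=\O(|E|^{\frac13-\varepsilon})\ll S$ --- and then runs Brent's scheduling bound $T_P=\O(W/P+D)$, so that the $|E|^{\varepsilon}$ slack between $P$ and $S$ absorbs the polylogarithmic span; this is exactly what is needed to make ``scales perfectly'' rigorous. You also correctly flag and repair a wrinkle the paper never confronts: the span-optimal $S$ is \emph{not} $\O(|E|^{\frac13-\varepsilon})$, so Corollary~\ref{corollary:window-work-efficient} does not apply to it verbatim, but work-efficiency still holds because at that $S$ the overhead term $\O(S^3\cdot\alpha(|V|)\cdot\log|V|\cdot\log|E|)$ equals the main term $\O(|E|\cdot\alpha(|V|))$. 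The remaining reliance on polylogarithmic per-iteration span for \texttt{make\_reservations}, \texttt{Parallel-Link-All}, and the bulk path compaction is the same appeal to~\cite{WorkEfficientUnionFind} that the paper itself makes, so no gap there.
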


It is worth mentioning that for dense graphs this total work becomes linear in the number of edges. This is because the sequential Union-Find data structure has another work bound of processing $|E|$ edges, which is $\O(|E| + |V| \cdot \lg^* (|V|))$, where $\lg^*$ is iterated logarithm. As a result, when $|E| > |V| \cdot \lg^* (|V|)$, this bound is better than $\O(|E| \cdot \alpha(|V|))$ and is just $\O(|E|)$.

\paragraph{Iteration Dependence Depth}
Another metric of interest is the \emph{iteration dependence depth}~\cite{ParallelismInRandomized}. 
Yet, the definition of this metric is not obvious for the Union-Find problem. 
A natural definition could be that \emph{there is a dependency between tasks if at some time there is a data race between them}. Following this definition, Corollary~\ref{corollary:window-iterations} means that the expected iteration dependency depth is $\O(|E|^{\frac 2 3} \cdot \log^{\frac 1 3} |E| \cdot \log^{\frac 1 3} |V|)$: by construction, there are no data races in the same processed prefix, which results in the iteration depth being bounded by the number of processed prefixes.  

A more general definition for dependency depth could be that exactly the same set of \texttt{union(u, v)} operations should succeed (i.e., return \texttt{true}). 
However, for this definition, the dependence depth is constant: failed \texttt{union(u, v)} operations depend on successful \texttt{union} operations that formed the path between $u$ and $v$, while successful operations do not have any dependencies. 

\subsection{Simple Extension to MST} 

Lastly, since the algorithm is internally deterministic, it works similarly to a parallel version of Kruskal's algorithm on a sorted sequence of edges. We believe this is the first direct parallelization of Kruskal's algorithm. 
For example, in the parallel MST algorithm of Katsigiannis et al. ~\cite{KruskalHelperThreads}, the main thread processes the whole edge sequence, while other threads are helping by checking and filtering out internal edges. In this case, the work of \emph{successful} \texttt{unite(u, v)} operations was still essentially sequential and done by the same thread. Moreover, in Blelloch et al. MST algorithm~\cite{InternallyDeterministicParallel}, few edges are processed on average in one iteration for random and star graphs. By contrast, we can parallelize this process here. The key restriction is that, to be provably efficient, our approach requires the sorted sequence of edges to be in fact a random shuffle of edges.

\begin{corollary}
There exists a MST algorithm for CRCW PRAM model with priority write that has $\O(|E| \cdot \alpha(|V|))$ expected total work, when random edge weights are generated independently and the edge sequence is already sorted, and it scales  for $T = \O(|E|^{\frac 1 3 - \varepsilon})$  parallel threads, for any constant $\varepsilon > 0$.
\end{corollary}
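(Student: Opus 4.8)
The plan is to derive this corollary directly from Theorem~\ref{theorem:main-parallel-algorithm} together with the internal-determinism property established for the algorithm in Listing~\ref{listing:window-union-find}. First I would recall that Kruskal's algorithm is precisely a Union-Find computation executed on the edges sorted by weight: one scans the sorted edge sequence and calls \texttt{union} on each edge, adding it to the spanning tree exactly when it joins two distinct components. Since the algorithm in Listing~\ref{listing:window-union-find} is \emph{internally deterministic}, running it on the sorted edge sequence produces exactly the same set of successful \texttt{union} operations as the sequential Kruskal execution, and hence the same minimum spanning tree. Correctness therefore requires no new argument.

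The heart of the proof is to show that the work and depth bounds of Theorem~\ref{theorem:main-parallel-algorithm} still apply, even though those bounds were stated for a \emph{uniformly random} edge ordering while here the ordering is fixed by the edge weights. The key observation is that the sorted-by-weight ordering is itself a uniformly random permutation of the edges, provided the weights are generated independently from a continuous distribution, so that ties occur with probability zero. This is a standard consequence of \emph{exchangeability}: if the weights $w_1, \ldots, w_{|E|}$ are i.i.d., their joint law is invariant under permutation of indices, so each of the $|E|!$ strict orderings is equally likely. Thus the permutation that sorts the edges by weight is uniform over all permutations, which is exactly the input distribution assumed in Subsection~\ref{subsection:random_process_definition} and used throughout the collision analysis.

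Given this reduction, I would simply invoke the established results: by Corollary~\ref{corollary:window-work-efficient} the expected total work is $\O(|E| \cdot \alpha(|V|))$ when $S = \O(|E|^{\frac 1 3 - \varepsilon})$, and by Corollary~\ref{corollary:window-iterations} the algorithm scales for $T = \O(|E|^{\frac 1 3 - \varepsilon})$ threads with the stated span. Because the statement assumes the edge sequence is \emph{already sorted}, no additional work needs to be charged for sorting; the analysis applies verbatim with the uniform-permutation distribution supplied by the weights.

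The main obstacle I anticipate is making the exchangeability reduction fully rigorous, in particular ruling out weight ties and confirming that the conditional structure used in the collision analysis (e.g. the conditional expectations $\E[X_t ~|~ \varepsilon_1, \ldots, \varepsilon_t]$) is preserved when the random order arises from sorting rather than from an explicitly drawn permutation. One must verify that the two sources of randomness coincide exactly as distributions, not merely in aggregate, so that every lemma in Section~\ref{section:collision-analysis} transfers without modification. Once this distributional equivalence is pinned down, the remainder of the argument is immediate.
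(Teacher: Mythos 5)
Your proposal is correct and follows essentially the same route as the paper: the paper's (implicit) argument is exactly that internal determinism makes the parallel Union-Find output coincide with sequential Kruskal's on the sorted sequence, and that independently generated random weights make the sorted order a uniform random shuffle, so Theorem~\ref{theorem:main-parallel-algorithm} applies verbatim. Your exchangeability elaboration (i.i.d.\ continuous weights, no ties) is a reasonable formalization of the paper's stated restriction that ``the sorted sequence of edges be in fact a random shuffle.''
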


\section{Lower Bounds}

In the previous sections, we showed upper bounds for different metrics, leaving open the question of tightness of these bound. We now attempt to close this gap. Our proofs require the reasonable assumption that the Union-Find tie-breaker when there are isomorphic connected components (in particular, of the same rank or size) acts randomly, i.e., compares random priorities, since it does not have any additional information about the graph structure.

\subsection{Lower Bound on Collision Count} We start by providing a lower bound on the number of collisions (Theorem~\ref{theorem:sum_pt}). Specifically, we show that the poly-logarithmic bound we provided cannot be improved by more than a logarithmic factor.

\begin{theorem} \label{theorem:lower-bound-pt}
For a cycle graph, we have  $\sum\limits_{t=0}^{|E|-1} p_t = \Omega(\log |V|)$.
\end{theorem}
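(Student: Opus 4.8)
The plan is to compute the per-step collision probability $p_t$ directly from Lemma~\ref{lemma:pt-equals} and show that the sum over $t$ is a harmonic sum. Write $n = |V| = |E|$. The structural fact that drives everything is that the only cycle in the graph is the full $n$-cycle, so for every $t \le n-1$ the $t$ inserted edges form a forest; hence at step $t$ there are exactly $k := n-t$ components, each a path, and contracting each component turns the active edges into a cycle $C_k$ on the components. Thus every component has exactly two incident active edges (for $k \ge 2$), so each $m_i^t \in \{0,1,2\}$ and $m_i^t(m_i^t-1) = 2 \cdot \mathbf{1}\{\text{both cyclic neighbors of } i \text{ have larger index}\}$. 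Calling a component whose two neighbors are both larger in the linking order (ties broken at random) a \emph{local minimum}, and letting $L_t$ count them, Lemma~\ref{lemma:pt-equals} collapses to
\[
\E [X_t ~|~ \varepsilon_1, \ldots, \varepsilon_t] = \frac{\sum_{i=1}^{C_t} m_i^t(m_i^t-1)}{(n-t)(n-t-1)} = \frac{2\,L_t}{(n-t)(n-t-1)},
\]
so that $p_t = \frac{2\,\E [L_t]}{(n-t)(n-t-1)}$.

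The heart of the argument is to show $\E [L_t] = \Omega(k)$. Under a uniformly random edge ordering, the inserted set at step $t$ is a uniformly random $t$-subset of the $n$ cycle edges, so the active edges are a uniformly random $(n-t)$-subset and the component sizes are exactly the arc lengths of this subset on the cycle. I would first prove that, for $k \ge 3$, any three cyclically consecutive component sizes are \emph{exchangeable}: fixing the four active edges bounding three consecutive arcs, the number of valid completions (placements of the remaining active edges in the rest of the cycle) depends only on the total length spanned by the three arcs, and is therefore symmetric under permuting the three arc lengths. Combining this with the random tie-breaker—each component gets an independent random priority, so the $(\text{size},\text{priority})$ keys are almost surely distinct and the consecutive triple of keys is exchangeable—the probability that the middle component is the smallest of the three, i.e.\ a local minimum, is exactly $\tfrac13$. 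By linearity, $\E [L_t] = k/3$ for $k \ge 3$ (and $\E [L_t] \ge 1$ when $k = 2$, which only helps).

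Substituting $\E [L_t] = (n-t)/3$ gives $p_t = \frac{2}{3(n-t-1)}$ for $0 \le t \le n-3$, whence
\[
\sum_{t=0}^{|E|-1} p_t \;\ge\; \sum_{t=0}^{n-3} \frac{2}{3(n-t-1)} \;=\; \frac{2}{3}\sum_{j=2}^{n-1}\frac{1}{j} \;=\; \frac{2}{3}\,(H_{n-1}-1) \;=\; \Omega(\log n) \;=\; \Omega(\log|V|),
\]
which is the claimed bound (using $|E| = |V| = n$ for a cycle).

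I expect the main obstacle to be the exchangeability step of the second paragraph: making the counting argument for consecutive arc lengths fully rigorous, correctly incorporating the random tie-breaking assumption so that the ``$1/3$'' is exact (rather than merely bounded below by a constant), and cleanly handling the degenerate small cases $k \in \{1,2\}$ where the two cyclic neighbors of a component coincide and the triple is no longer genuinely three distinct components. Everything else—the reduction via Lemma~\ref{lemma:pt-equals} and the final harmonic summation—should be routine once the constant-fraction bound on $\E [L_t]$ is in hand.
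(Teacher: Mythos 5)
Your proof is correct and takes essentially the same route as the paper's: both arguments reduce to the observation that, on the contracted cycle, a collision can only occur between the two active edges bounding a component that precedes both of its cyclic neighbours in the linking order (an event of probability exactly $\tfrac13$ by symmetry/exchangeability of consecutive arc lengths with random tie-breaking), followed by the same harmonic summation. Your rendering via Lemma~\ref{lemma:pt-equals} and local-minima counting is just a sharper, exact version of the paper's bound (the paper settles for $p_t \geq \tfrac{1}{3}\cdot\tfrac{1}{|V|-t-1}$ by conditioning on one randomly chosen edge), and the exchangeability step you flag as the main obstacle is handled no less rigorously in your sketch than by the paper's bare appeal to ``symmetry of the graph.''
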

\begin{proof}

\begin{figure}
\centering
\includegraphics[width=0.23\linewidth]{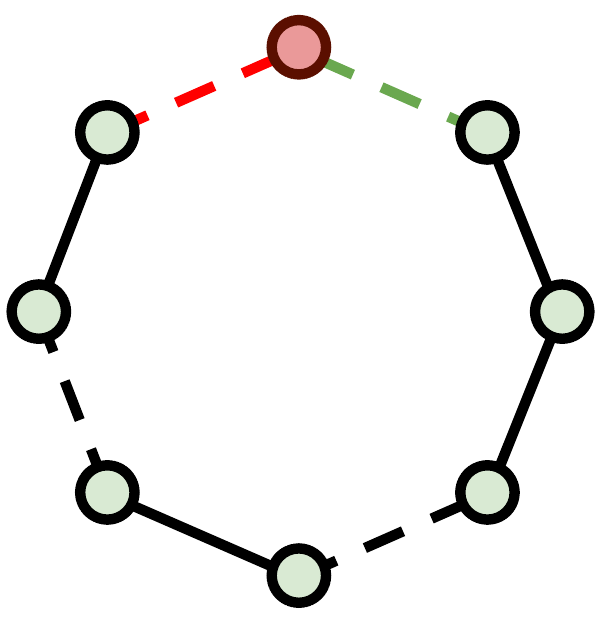}
%\vspace{-0.5em}
\caption{An example of Union-Find execution on a cycle graph. Solid lines symbolize already added edges, while dotted lines are edges that are not inserted yet. The red edge was randomly chosen, its clockwise next component is smaller than its neighbours, so the only edge this edge can have collision with it is the green edge.}
\label{figure:cycle-graph}
\end{figure}

Consider a cycle graph. In step $t$, there are exactly $|V|-t$ connected components and $|V|-t$ active edges. Note that since we added edges in a uniformly random order, any combination of $t$ edges from the cycle is equally probable to be picked. For every connected component, there are two neighboring components (i.e., components adjacent through an active edge) in any step $t$, when $t \leq |V| - 3$. An important observation is that if we contract every currently present connected component in a single vertex, the remaining graph is still a cycle. When we add a random edge, for the clockwise-next endpoint component there is a $\frac 1 3$ chance that it is smaller than both of its neighbouring components due to symmetry of the graph (unless there are two or less components in total, but this occurs only in the  last $3$ steps). If it is smaller than its neighbours, then there is exactly one edge (out of $|V|-t$ edges) in the graph that can cause a collision with the chosen one -- the edge between this endpoint and its other neighbouring component (Figure~\ref{figure:cycle-graph}). So, we get:

\begin{gather*}
\sum\limits_{t=0}^{|V|-1} p_t \geq  \sum\limits_{t=0}^{{ |V|-3}} p_t \geq \sum\limits_{t=0}^{|V|-3} \frac {1} {3} \cdot \frac 1 {|V|-t - 1} = \Theta(\log |V|). 
\end{gather*}
(the sum of Harmonic progression)
\end{proof}

This implies the following for the model in Section~\ref{subsection:concurrent-model}: 

\begin{theorem}
The expected write contention for two threads in the concurrent model is $\Omega(\log |V|)$ for cycle graphs.
\end{theorem}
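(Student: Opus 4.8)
The plan is to reuse the exact correspondence between the two-thread concurrent model and the sequential random process that was established in the proof of Theorem~\ref{theorem:concurrent-two-threads}, but now to extract a matching lower bound instead of an upper bound. Recall from that proof that iteration $t$ of the two-thread process corresponds to step $2t$ of the random process (two fresh edges are inserted per iteration), and that the probability of write contention at iteration $t$ is exactly $p_{2t}$. Hence the expected number of write contention events for two threads is precisely $\sum_{t=0}^{|E|/2} p_{2t}$, and it suffices to lower bound this sum of even-indexed collision probabilities for the cycle graph.

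First I would specialize to the cycle graph, where $|E| = |V|$, and invoke the per-step estimate already derived inside the proof of Theorem~\ref{theorem:lower-bound-pt}: for every step $t \leq |V|-3$ one has $p_t \geq \tfrac{1}{3} \cdot \tfrac{1}{|V|-t-1}$. Substituting the even steps $t = 2j$ gives
\[
\sum_{j=0}^{\lfloor (|V|-3)/2 \rfloor} p_{2j} \;\geq\; \frac{1}{3} \sum_{j=0}^{\lfloor (|V|-3)/2 \rfloor} \frac{1}{|V| - 2j - 1}.
\]
Re-indexing by $k = |V| - 2j - 1$, the right-hand side is $\tfrac{1}{3}$ times a sum of reciprocals taken over every other integer between $1$ and $|V|-1$.

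The only point that needs care is confirming that retaining only every second harmonic term does not collapse the logarithm down to a constant. This is immediate: a sum $\sum_k \tfrac{1}{k}$ ranging over, say, the odd (or even) integers up to $|V|$ equals $\tfrac{1}{2} H_{|V|}$ up to an additive constant, which is $\Theta(\log |V|)$, since discarding half the terms of the harmonic series merely halves its leading coefficient. Therefore $\sum_{t=0}^{|E|/2} p_{2t} = \Omega(\log |V|)$, which is exactly the claimed lower bound on the expected write contention for two threads on the cycle graph. I do not anticipate any genuine obstacle: the whole argument is a lower-bound mirror of Theorem~\ref{theorem:concurrent-two-threads} fed by the cycle-graph estimate of Theorem~\ref{theorem:lower-bound-pt}, and the harmonic-sum-over-every-other-term evaluation is the sole quantitative step.
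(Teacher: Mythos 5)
Your proposal is correct and follows essentially the same route as the paper's own proof: combine the identity (expected contention $= \sum_{t} p_{2t}$) from Theorem~\ref{theorem:concurrent-two-threads} with the per-step bound $p_t \geq \tfrac{1}{3}\cdot\tfrac{1}{|V|-t-1}$ from Theorem~\ref{theorem:lower-bound-pt}, and observe that the harmonic sum over every second index is still $\Theta(\log |V|)$. Your write-up is in fact slightly more careful than the paper's (explicitly truncating at $t \leq |V|-3$ and justifying the every-other-term harmonic evaluation), but the argument is the same.
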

\begin{proof}
As we know from Theorem~\ref{theorem:lower-bound-pt}, for a cycle graph $p_t \geq \frac 1 3 \cdot \frac 1 {|V| - t - 1}$. On the other hand, from Theorem~\ref{theorem:concurrent-two-threads}, the expected number of write contentions is $\sum\limits_{t=0}^{|E|/2} p_{2t}$. As a result, the required bound is the sum of harmonic series over every second element and is still $\Theta(\log |V|)$.
\end{proof}

\subsection{Lower Bound on Number of Iterations}
A significantly more involved analysis is required for the parallel algorithm from Section~\ref{subsection:parallel-model}. The challenge is that we cannot use the symmetry argument to show that the probability of collision for an adjacent pair of edges is $\frac 1 3$, since the subsequences of edges processed in each iteration do not have fixed positions, and instead depend on when a collision happened in the previous iteration. 
We begin with a technical lemma, which is proved in Appendix~\ref{appendix:proofs}. 

\begin{lemma} \label{lemma:minimas-random-permutation}
    Consider a random permutation of $1, 2, \ldots, N$. 
    Let $M$ be the number of \emph{local minima} in this permutation, i.e. the number of elements $S_i$ smaller than both their neighbours $S_{i-1}$ and $S_{i+1}$ (supposing that the end elements $S_1$ and $S_N$ are also adjacent). Then, it holds that 
    $\texttt{\textup{Pr}}[M \leq \frac {N - 3} {18}] \leq \frac {24} {N - 3}$. 
\end{lemma}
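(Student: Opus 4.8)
The plan is to run a second-moment (Chebyshev) argument, but to sidestep the dependence between overlapping local-minimum events by restricting attention to a well-separated subset of positions. For a single position $i$, the three values $S_{i-1}, S_i, S_{i+1}$ appear in uniformly random relative order, so $i$ is a local minimum with probability exactly $\frac{1}{3}$, giving $\E[M] = \frac{N}{3}$. The obstacle to a direct variance computation is that the indicators for positions $i$ and $i+1$ (or $i+2$) are correlated, since their defining windows $\{i-1,i,i+1\}$ overlap; isolating a clean set of independent indicators is what makes the calculation tractable.

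First I would select a set $P$ of positions on the cycle that are pairwise at cyclic distance at least $3$, so that the length-$3$ windows centered at distinct elements of $P$ are pairwise disjoint; on a cycle of $N$ positions one can take $K := |P| = \lfloor N/3\rfloor \ge \frac{N-3}{3}$, taking care that the wrap-around gap between the last and first chosen position is also at least $3$. The crucial observation is that, in a uniform random permutation, the relative order of the values at any fixed set of positions is itself uniform, and the relative orders within two disjoint position-sets are independent. Consequently the local-minimum indicators $\{I_p\}_{p\in P}$ are independent (in particular pairwise uncorrelated) Bernoulli($\frac13$) variables. Writing $M' = \sum_{p \in P} I_p$, we have $M \ge M'$, together with $\E[M'] = \frac{K}{3}$ and $\mathrm{Var}[M'] = \frac{2K}{9}$.

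It then remains to apply Chebyshev's inequality to $M'$. Since $K \ge \frac{N-3}{3}$, the threshold satisfies $\frac{N-3}{18} \le \frac{K}{6} = \frac12\E[M']$, so it lies at most halfway below the mean, and the deviation $a = \E[M'] - \frac{N-3}{18}$ is at least $\frac{K}{6}$. Chebyshev then gives
\[
\Pr\!\Big[M \le \tfrac{N-3}{18}\Big] \le \Pr\!\Big[M' \le \tfrac{N-3}{18}\Big] \le \frac{\mathrm{Var}[M']}{a^2} \le \frac{2K/9}{(K/6)^2} = \frac{8}{K} \le \frac{24}{N-3},
\]
which is exactly the claimed bound.

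The step I expect to be the main obstacle is justifying the vanishing covariance (indeed independence) of the disjoint-window indicators, since this is precisely what lets the otherwise-entangled local-minimum events behave like i.i.d. coin flips. I would establish it by the relative-order factorization above, generating the permutation by first assigning a random set of ranks to each window and then ordering within windows independently, which yields $\Pr[I_p = 1, I_q = 1] = \frac19 = \Pr[I_p=1]\Pr[I_q=1]$ for disjoint windows. As an alternative that avoids passing to a subset, one could compute the variance of the full sum $M$ directly: disjoint windows contribute zero covariance, adjacent positions have covariance $-\frac19$ (two adjacent local minima are impossible), and positions at distance $2$ have covariance $\frac1{45}$, yielding $\mathrm{Var}[M] = \frac{2N}{45}$ and an even stronger Chebyshev bound. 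I would keep the subset argument as the primary route, since its constants align precisely with the stated $\frac{24}{N-3}$.
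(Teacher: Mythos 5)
Your proposal is correct and follows essentially the same route as the paper's proof: the paper also restricts to every third position ($M^* = \sum_i X_{3i}$), uses disjointness of the length-$3$ windows to get independent Bernoulli$(\tfrac13)$ indicators with variance $\tfrac29$ each, and applies Chebyshev's inequality with the deviation $\lfloor N/3\rfloor/6$ to obtain exactly the bound $\frac{24}{N-3}$. The only cosmetic difference is your added (correct) side remark about computing $\mathrm{Var}[M]$ of the full sum directly via the covariance structure, which the paper does not pursue.
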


The next lemma will be the core of the lower bound proof. Similarly to Theorem~\ref{theorem:lower-bound-pt}, it analyses the collision probability for a cycle graph. Intuitively, it shows that for a prefix of edges of size $\Omega(\sqrt {N \cdot \log N})$ the probability of collision is almost $1$. The (quite complex) proof is also provided in Appendix~\ref{appendix:proofs}.

\begin{lemma} \label{lemma:cycle-graph-prefix-conflict}
    Consider a sequence of randomly shuffled cycle graph edges of size $N$ and its prefix of size $W \geq C \cdot \sqrt {N \cdot \log N}$, where $C$ is some constant which will be made clear in the proof. Suppose there are $M=pN$ pairs of adjacent colliding edges in this graph, $0<p\leq\frac 1 3$ is a constant. Then the probability that there is no collision in this prefix is $\O\left(\frac 1 {N^2}\right)$.
\end{lemma}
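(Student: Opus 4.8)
The plan is to reduce the statement to a simple estimate on how likely a uniformly random $W$-subset of the edges is to fully contain one of the colliding pairs. First I would pin down the combinatorial structure of the $M = pN$ colliding pairs. In the linking-by-random-priorities model, two adjacent cycle edges $(u,v)$ and $(v,w)$ collide exactly when their shared component $v$ has strictly smaller priority than both neighbours, i.e.\ when $v$ is a local minimum of the priority permutation. Since two adjacent vertices cannot both be local minima, these colliding pairs are \emph{edge-disjoint}: each pair consists of the two edges incident to one local minimum, and each edge lies in at most one pair. Thus the $M = pN$ pairs involve $2M = 2pN \le \tfrac{2}{3}N$ distinct edges.

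Next I would observe that, since the edge order is a uniform random permutation independent of the vertex priorities, the set of edges landing in the prefix $[0,W)$ is a uniformly random $W$-subset $S$ of the $N$ edges, while the colliding-pair structure is fixed once we condition on the priorities. The key reduction is that there is a collision inside the prefix if and only if some colliding pair $P_j$ is fully contained in $S$: at the initial all-singletons state, two prefix edges collide precisely when they form one of these pairs. Hence it suffices to bound $\Pr[\,\forall j:\ P_j \not\subseteq S\,]$.

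The main estimate then follows from (near-)independence across disjoint pairs. For each pair $P_j=\{a_j,b_j\}$ we have $\Pr[P_j\subseteq S] = \frac{W(W-1)}{N(N-1)} \ge \frac{(W-1)^2}{N^2}$. Because the pairs are edge-disjoint, the events $\{P_j\subseteq S\}$ depend on disjoint blocks of the (negatively associated) membership indicators of a fixed-size random subset, so by negative association the ``no pair'' events are sub-multiplicative:
\begin{gather*}
\Pr[\,\forall j:\ P_j\not\subseteq S\,] \;\le\; \prod_{j=1}^{M}\bigl(1-\Pr[P_j\subseteq S]\bigr) \;\le\; \exp\!\left(-M\cdot\frac{(W-1)^2}{N^2}\right) \;=\; \exp\!\left(-p\,\frac{(W-1)^2}{N}\right).
\end{gather*}
Plugging in $W \ge C\sqrt{N\log N}$ makes the exponent at most $-\tfrac{p}{2}C^2\log N$ for large $N$, so the probability is at most $N^{-pC^2/2}$; choosing $C$ so that $pC^2/2 \ge 2$ (e.g.\ $C = 3/\sqrt{p}$) yields the claimed $\O(1/N^2)$ bound. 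If a fully self-contained argument is preferred over invoking negative association, the same conclusion follows by Poissonizing the prefix (include each edge independently with probability $W/N$, whereupon the pair events become genuinely independent and the bound $(1-(W/N)^2)^M$ is exact) and transferring back to the fixed-size model using that ``no pair contained'' is a decreasing event.

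I expect the main obstacle to be the dependence introduced by sampling the prefix \emph{without} replacement: the naive independence heuristic $(1-(W/N)^2)^M$ is not literally valid for a fixed-size subset, and the cleanest remedy is the negative-association (or Poissonization) step above. A secondary point requiring care is justifying that the only collisions possible in the all-singletons starting state are the adjacent local-minimum pairs---in particular that non-adjacent prefix edges, and adjacent edges sharing a non-minimal component, never collide---so that the reduction to ``a pair is fully contained'' is exact rather than merely sufficient.
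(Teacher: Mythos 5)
Your proof is correct, but it takes a genuinely different route from the paper's. The paper argues by sequential revelation: it sets up a recurrence $T(t,m,k)$ for the no-collision probability as prefix edges are drawn one at a time without replacement, eliminates $k=M-m$, expands $T(0,M)$ as the binomial sum $\sum_{i}\binom{W}{i}V_{M-i}$, and bounds that sum in two regimes --- for $i\le lW$ by playing an entropy bound on $\sum_{i\le lW}\binom{W}{i}$ against the decay $V_{M-i}\le(1-2p)^{(1-l)W}$, and for $i>lW$ by comparing with a second recurrence $\tilde T\equiv 1$ and the ratio estimate $V_{M-i}/\tilde V_{M-i}\le e^{-i^2/16M}$, which is at most $1/N^2$ once $C$ is chosen large. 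You instead reduce to a static question --- a uniformly random $W$-subset must avoid fully containing any of $M$ \emph{edge-disjoint} pairs --- and obtain the product bound $\prod_j\bigl(1-\Pr[P_j\subseteq S]\bigr)\le\exp\bigl(-p(W-1)^2/N\bigr)$ from negative association of sampling without replacement (or, as you note, by Poissonizing and transferring back via monotonicity of the decreasing event). Your route is substantially shorter, gives essentially the same quantitative bound, and makes explicit a structural fact the paper uses only silently: its recurrence transition, which counts $2m$ edges touching the $m$ untouched pairs, is valid precisely because the colliding pairs are edge-disjoint, which you justify via the local-minima characterization. What the paper's longer computation buys in exchange is self-containedness: it needs nothing beyond elementary estimates, whereas you invoke negative association (Joag-Dev--Proschan) or a Poissonization argument as a black box. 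One small refinement: you only need the implication ``some $P_j\subseteq S$ implies a collision,'' not the stated equivalence; this is actually convenient, because when the lemma is applied at later iterations of the lower-bound theorem there may exist colliding pairs beyond the $M$ counted ones, and the one-sided implication keeps your bound valid there, since the no-collision event is still contained in the event you bound.
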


We can now finally state and prove our main lower bound result: 

\begin{theorem}
The number of iterations of any parallel algorithm processing conflict-free prefix of edges (in particular, of the algorithm in Listing~\ref{listing:window-union-find}) is $\Omega \left(\frac {\sqrt {|E|}} {\sqrt {\log {|E|}}}\right)$ with high probability for a cycle graph.
\end{theorem}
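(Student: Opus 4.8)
The plan is to show that the algorithm needs many iterations by proving that \emph{almost every} window of size $W = \Theta(\sqrt{|E|/\log|E|})$ must contain a collision, so the conflict-free prefix processed in each iteration has length $\O(W)$ with high probability, forcing $\Omega(|E|/W) = \Omega(\sqrt{|E|}/\sqrt{\log|E|})$ iterations. The two supplied lemmas provide exactly the ingredients. Lemma~\ref{lemma:minimas-random-permutation} guarantees that a random permutation of the $|V|$ component-sizes around the cycle has $M = \Omega(|V|)$ local minima with high probability; since for a cycle graph $|E| = |V|$, this gives us a linear number $M = pN$ of adjacent colliding edge-pairs (each local minimum in the size ordering is exactly a ``smaller component shared by two active edges'', i.e. a colliding pair) to feed into Lemma~\ref{lemma:cycle-graph-prefix-conflict}.

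\textbf{First} I would fix the window size $W = C \cdot \sqrt{|E| \cdot \log|E|}$ from Lemma~\ref{lemma:cycle-graph-prefix-conflict} (note $N = |E|$ here). By that lemma, conditioned on there being $M = pN$ adjacent colliding pairs for a constant $p$, the probability that a given window of size $W$ starting at a fixed position is collision-free is $\O(1/N^2)$. \textbf{Next} I would take a union bound over all $N$ possible window start positions: the probability that \emph{any} length-$W$ prefix of the remaining edge sequence is collision-free is $\O(N \cdot 1/N^2) = \O(1/N)$, which vanishes. Combining this with Lemma~\ref{lemma:minimas-random-permutation} (to discharge the conditioning on $M = \Omega(N)$, which holds except with probability $\O(1/N)$ by choosing $p = 1/18$), a second union bound shows that with probability $1 - \O(1/N)$ \emph{every} window of size $W$ in the original shuffle contains at least one collision.

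\textbf{Then} I would translate this ``every window collides'' event into an iteration lower bound. The algorithm in Listing~\ref{listing:window-union-find} processes, in each iteration, only the prefix of remaining edges up to the first collision; since every length-$W$ window contains a collision, each iteration advances the cursor \texttt{i} by strictly fewer than $W$ positions. Because $|E|$ edges must all be processed and each iteration consumes at most $W - 1 < W$ edges, the total number of iterations is at least $|E|/W = |E| / (C\sqrt{|E|\log|E|}) = \Omega(\sqrt{|E|}/\sqrt{\log|E|})$. The same counting argument applies verbatim to any parallel algorithm that processes a conflict-free prefix of edges, since such an algorithm is likewise blocked at the first collision of each window. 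The high-probability qualifier is inherited from the $1 - \O(1/N)$ bound on the ``every window collides'' event.

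\textbf{The main obstacle} I anticipate is the union bound over window start positions: the remaining edge sequence evolves across iterations, and the start positions of the windows the algorithm actually examines are correlated with where previous collisions occurred, so the windows are not independent and their positions are not fixed in advance. I would circumvent this exactly as in the proof of Theorem~\ref{theorem:window-iterations}: rather than reasoning about the algorithm's data-dependent windows, I argue about \emph{all} $N$ fixed-position windows of the original random shuffle simultaneously (as in the toy algorithm of Listing~\ref{listing:toy-window-union-find}), for which all tail permutations remain equiprobable and Lemma~\ref{lemma:cycle-graph-prefix-conflict} applies cleanly. Since the windows the algorithm inspects are a subset of these fixed windows, showing that all fixed windows collide suffices to lower-bound the algorithm's iteration count, sidestepping the correlation issue entirely. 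A secondary subtlety is confirming that the local-minima count of Lemma~\ref{lemma:minimas-random-permutation} corresponds precisely to adjacent colliding edge-pairs under the cycle-graph contraction picture of Theorem~\ref{theorem:lower-bound-pt}; I would verify this correspondence at the outset so that the constant $p = 1/18$ is legitimately a constant in the regime $0 < p \le 1/3$ required by Lemma~\ref{lemma:cycle-graph-prefix-conflict}.
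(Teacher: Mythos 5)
Your high-level plan is the paper's plan (same two lemmas, same window size, same counting at the end), but there is a genuine gap in the middle step: you treat the set of colliding pairs as \emph{static}. The hypothesis of Lemma~\ref{lemma:cycle-graph-prefix-conflict} --- that there are $M = pN$ pairs of adjacent colliding edges for a constant $p$ --- is established by Lemma~\ref{lemma:minimas-random-permutation} only for the \emph{initial} configuration (and note it is local minima of the random tie-breaking priorities, not of ``component sizes'': at time $0$ all components are singletons). Collision is a property of the current Union-Find state: as edges are processed, components merge and colliding pairs are destroyed. For a window starting at position $k$, the relevant collision structure is the one after $k$ edges have been inserted; for late windows ($k$ close to $|E|$) there cannot be $pN$ colliding pairs at all, since only $|E|-k$ active edges remain, and the colliding-pair count at time $k$ is itself a random variable that neither of the two lemmas controls. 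So your union bound over \emph{all} $N$ start positions applies the lemma at positions where its hypothesis fails (and is not even well-defined), and the resulting claim ``with probability $1-\O(1/N)$ every window of size $W$ contains a collision'' is unsupported; with it falls the final step ``each iteration consumes at most $W-1$ edges,'' which needs \emph{every} iteration, including the late ones, to be short.

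The paper closes exactly this hole with two ingredients your proposal is missing. First, a \emph{persistence} argument: on the cycle, processing one edge destroys at most $2$ colliding pairs (only the pairs containing that edge; pairs at neighbouring components survive, since their shared component only becomes relatively smaller), so after processing any $\le \frac{p}{3}|E|$ edges at least $p|E| - \frac{2p}{3}|E| = \frac{p}{3}|E|$ colliding pairs remain, \emph{deterministically}, whatever the shuffle. Second, a restriction of the whole argument to the first $\frac{p}{3}|E|$ edges: it suffices to show that $\Omega\bigl(\sqrt{|E|}/\sqrt{\log|E|}\bigr)$ iterations are needed just to get through this prefix, and within this prefix Lemma~\ref{lemma:cycle-graph-prefix-conflict} is applicable (with constant $\frac{p}{3}$) to each iteration's window. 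The paper then union-bounds over the $I = \frac{p|E|/3}{W}$ iterations, conditioning each on the previous windows being short so that the tail after a window's first (possibly correlated) edge is still uniformly random; your ``all fixed-position windows'' device addresses that same correlation issue and, once restricted to positions $k \le \frac{p}{3}|E|$, would serve equally well. With those two additions your argument becomes correct and is then essentially the paper's proof.
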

\begin{proof}
Again consider a cycle graph.
Initially, when the Union-Find data structure is empty, all connected components are of size and rank $1$. In this case, we assumed that linking by rank or by size use random priorities to break ties. As a result, the number of vertices adjacent to two colliding edges is exactly the number of local minima in the random priorities permutation. Consequently, by Lemma~\ref{lemma:minimas-random-permutation}, the number of pairs of colliding edges is at least $p|E|$ for some constant $p>0$ with probability of failure $\O\left(\frac 1 {|E|}\right)$.

We will prove that $\Omega\left(\frac {\sqrt{|E|}} { \sqrt{\log |E|}}\right)$ iterations are needed to process the first $\frac p 3 |E|$ edges. When an edge is processed, at most $2$ pairs of colliding edges (adjacent to the edge) can stop colliding, since their adjacent connected components change. Thus, after $\frac p 3 |E|$ edges there will still be at least $p|E| - 2 \frac p 3 |E| = \frac {p} {3} |E|$ pairs of colliding edges (which are still adjacent to connected components of size $1$). 

Let $C$ be the constant from Lemma~\ref{lemma:cycle-graph-prefix-conflict} for the case when there are at least $\frac p 3 |E|$ pairs of colliding edges. Let $A_i$ be the event that in the $i$-th iteration, the processed prefix is of size at least $W = C \cdot \sqrt {|E|} \sqrt{\log {|E|}} + 1$. Let $I = \frac {\frac p 3 |E|} {W}$.
Observe that to process $\frac p 3 |E|$ edges in less than $I = \Omega\left(\frac {\sqrt {|E|}} {\sqrt {\log |E|}}\right)$ iterations, at least one of the events $A_i$ for these iterations should happen.

Now if we show that $Pr[\bigcup\limits_{i\leq I} A_i] = \O\left(\frac 1 {|E|}\right)$, the probability of using less than $I$ iterations will be $\O\left(\frac 1 {|E|}\right)$.
For this we required about the existence of enough pairs of colliding edges, which is not satisfied with probability $\O\left(\frac 1 {|E|}\right)$. Summing up these observations, with probability at least $1 - \O\left(\frac 1 {|E|}\right)$ at least $\Omega\left(\frac {\sqrt {|E|}} {\sqrt {\log {|E|}}}\right)$ algorithm iterations are needed, completing the proof. 

We first remark that 
\begin{gather*}
Pr\left[\bigcup\limits_{i\leq I} A_i\right] 
% = Pr[A_1] + Pr[A_2 ~|~ \overline{A}_1] \cdot Pr[\overline{A}_1] + \ldots + \\ + Pr[A_{I} ~|~ \overline{A}_1, \cdots, \overline{A}_{I-1}] \cdot Pr[\overline{A}_1, \cdots, \overline{A}_{I-1}] \leq \\ 
\leq \sum\limits_{i=1}^{I} Pr[A_i ~|~ \overline{A}_1, \ldots, \overline{A}_{i-1}]
\end{gather*}

Consider the probability of such an event $Pr[A_i ~|~ \overline{A}_1, \ldots, \overline{A}_{i-1}]$. Since the previous $A_j$ ($j < i$) are assumed to not occur, less than $\frac p 3 |E|$ edges were processed before. The first edge in the next processing window may have had a collision with some edge in the previous iteration. However, no information is known about edges after it or their order, so we can still assume that the order is uniformly random. 
This lets us to apply Lemma~\ref{lemma:cycle-graph-prefix-conflict} for the next $W-1$ edges after the first one to obtain that the probability of no collision among these edges is $\O\left(\frac 1 {|E|^2}\right)$, i.e., with probability at least $1-\O\left(\frac 1 {|E|^2}\right)$ no more than $W$ edges can be processed in i-th iteration due to data dependencies. Since it is true for any fixed possible start position of the processing window, it is also true in general, i.e., $Pr[A_i ~|~ \overline{A}_1, \ldots, \overline{A}_{i-1}] = \O\left(\frac 1 {|E|^2}\right)$. As a result, $Pr[\bigcup\limits_{i\leq I} A_i] = \O\left(\frac {I} {|E|^2}\right)$, which completes the proof once we substitute the value of $I$.
\end{proof}

In other words, this theorem shows that, even when the algorithm in Listing~\ref{listing:window-union-find} tries to process the whole edge sequence in each iteration, the number of iterations in the worst case is still asymptotically close to the work-efficient version from Theorem~\ref{theorem:main-parallel-algorithm}.

\section{Discussion}

\paragraph{Overview} We have made new progress on understanding the parallel Union-Find problem, providing a new analysis suggesting that classic algorithms for this problem should parallelize well, since data races have low probability, which leads to the first algorithm that is both provably work-efficient and internally-deterministic. 

\paragraph{Practical Performance} 
One interesting question is whether our algorithm is \emph{practical}, i.e. whether ensuring internal determinism imposes significant performance bottlenecks. 
We investigate this question in detail in Appendix~\ref{appendix:experiments}. Figure~\ref{figure:plot-throughput-main} briefly summarizes the results, showing that our algorithm can be competitive with that of Blelloch et al., which does not have theoretical guarantees, and with the fully-concurrent algorithm of Jayanti and Tarjan, which is not internally-deterministic.    

\begin{figure}[t!]
\centering
\includegraphics[width=0.97\linewidth]{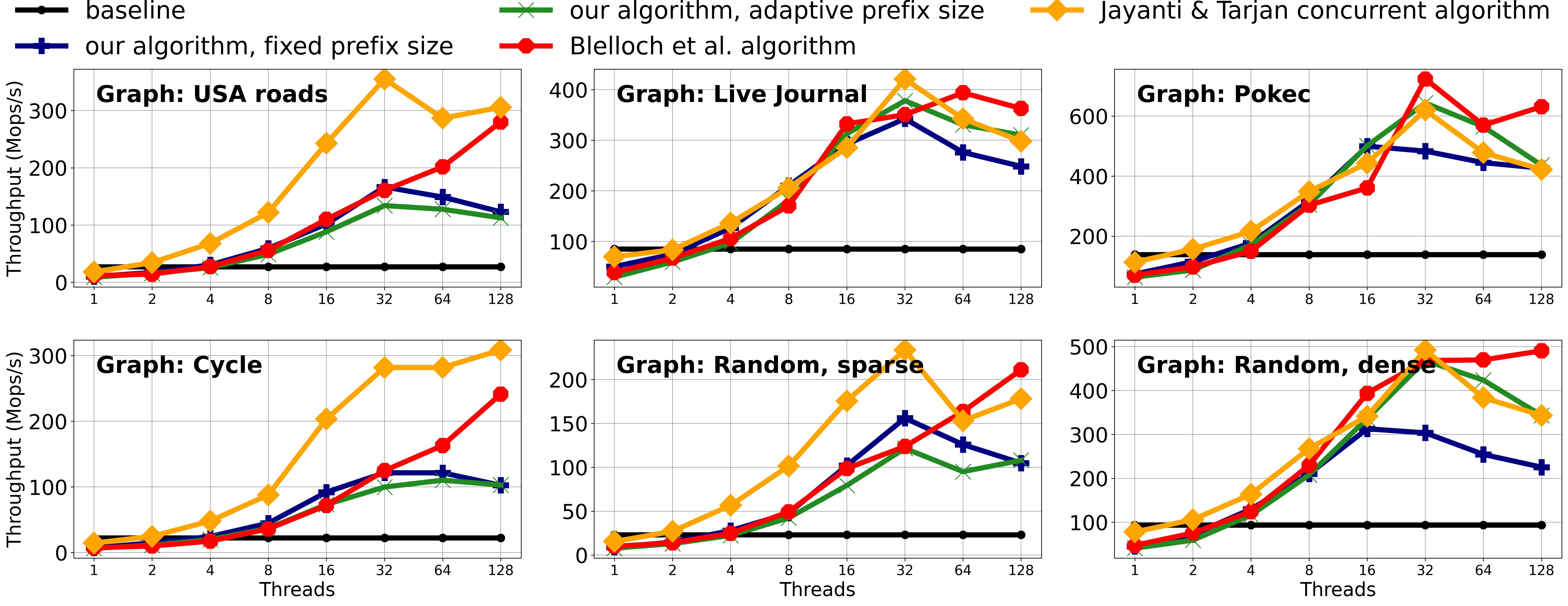}
%\vspace{-0.5em}
\caption{Performance comparison of  parallel and concurrent Union-Find algorithms, relative to an optimized sequential baseline. 
Our algorithm is competitive with prior proposals with weaker guarantees; all algorithms experience a performance drop at $64$ threads, when computation is performed across two NUMA sockets. Please see Appendix for full results.}
\label{figure:plot-throughput-main}
\end{figure}

 \paragraph{Future Work}  
 Our collision analysis can also be employed as a basis for analyzing the parallelization potential of Union-Find in  other computation models, such as execution via relaxed scheduling~\cite{alistarh2018relaxed}. Further, since our potential argument works over arbitrary graph families, our results could extend to the parallelization of other graph problems, such as minimum spanning tree (MST).

\bibliographystyle{unsrt} 

\balance
\bibliography{source.bib}

\appendix

\section{Deferred Proofs}
\label{appendix:proofs}

\begin{replemma}{lemma:minimas-random-permutation}
    Consider a random permutation of $1, 2, \ldots, N$. 
    Let $M$ be the number of \emph{local minima} in this permutation, i.e. the number of elements $S_i$ smaller than both their neighbours $S_{i-1}$ and $S_{i+1}$ (supposing that the end elements $S_1$ and $S_N$ are also adjacent). Then, it holds that 
    $\texttt{\textup{Pr}}[M \leq \frac {N - 3} {18}] \leq \frac {24} {N - 3}$. 
\end{replemma}

\begin{proof}
    Let $X_i$ be indicator variables ``$i$ is the position of a local minimum''. Then $\E [M] = \E [\sum\limits_i X_i]$.  

    Trivially, $\E [X_i] = \frac 1 3$, since among all permutations of element $i$ and its neighbours exactly $2$ out of $3!=6$ result in a local minima at position $i$. Moreover, $Var[X_i] = \E [X_i^2] - \E [X_i]^2 = \E [X_i] - \E [X_i]^2 = \frac 1 3 - \frac 1 9 = \frac 2 9$.

    Let $M^* = X_3 + X_6 + X_9 + \ldots = \sum\limits_{i=1}^{N/3} X_{3i}$, i.e. the sum over every third element of the permutation. Since the elements and their adjacent elements in this sum are all disjoint, the indicator variables are independent. In other words, for $X_{3i}$, even when the values of all other indicator variables in this sum are known, they provide no information about the order of element $3i$ and its neighbours. As a result, $Var[M^*] = \sum\limits_i Var[X_{3i}] = \lfloor \frac {N} {3} \rfloor \cdot \frac 2 9$.

    We also know that $\E [M^*] = \sum\limits_i \E[X_{3i}] = \lfloor \frac {N} {3} \rfloor \cdot \frac 1 3$.

    By Chebyshev's inequality, we obtain that 
    $\texttt{Pr}[\lfloor \frac {N} {3} \rfloor \cdot \frac 1 3 - M^* \geq \lfloor \frac {N} {3} \rfloor \cdot \frac 1 6] \leq \frac {\lfloor \frac {N} {3} \rfloor \cdot \frac 2 9} {\left(\lfloor \frac {N} {3} \rfloor \cdot \frac 1 6\right)^2} \leq \frac {24} {N-3}$.

    Since $M^* \leq M$, this means that $\texttt{Pr}[M \leq \frac {N - 3} {18}] \leq \frac {24} {N-3}$.
\end{proof}

\begin{lemma} \label{lemma:sum-combinations}
    $\sum\limits_{k=0}^{pN} \binom {N} {k} \leq 2^{H(p)N - \ln N / 2 + \O(1)}$ for a constant $0 < p <\frac 1 2$, where $H(p)$ is the entropy function, which is has a constant value here as well.
\end{lemma}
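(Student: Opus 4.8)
The plan is to reduce the sum to a constant multiple of its single largest term and then estimate that term by Stirling's formula. Because $p$ is a constant strictly below $\frac12$, the summands $\binom{N}{k}$ are strictly increasing for $0 \le k \le pN$, so the whole truncated sum is controlled by its top term $\binom{N}{pN}$, and the claimed exponent is exactly what Stirling yields for one such near-central binomial coefficient.

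First I would bound the sum by a geometric tail running downward from the top. For $k \le pN < N/2$ the ratio of successive terms satisfies
\[ \frac{\binom{N}{k-1}}{\binom{N}{k}} = \frac{k}{N-k+1} \le \frac{pN}{(1-p)N} = \frac{p}{1-p} =: r , \]
and since $0 < p < \frac12$ is constant, $r<1$ is bounded away from $1$. Hence $\binom{N}{pN-j} \le r^{\,j}\binom{N}{pN}$, and summing the geometric series gives
\[ \sum_{k=0}^{pN}\binom{N}{k} \le \frac{1}{1-r}\binom{N}{pN} = \O\!\left(\binom{N}{pN}\right), \]
where the constant $\frac{1}{1-r} = \frac{1-p}{1-2p}$ is absorbed into the additive $\O(1)$ in the exponent.

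Next I would apply Stirling's approximation $n! = \sqrt{2\pi n}\,(n/e)^n\,(1+\O(1/n))$ to the three factorials in $\binom{N}{pN} = N!/\big((pN)!\,((1-p)N)!\big)$. The $(n/e)^n$ parts combine to $2^{H(p)N}$ with $H(p) = -p\log_2 p - (1-p)\log_2(1-p)$, while the $\sqrt{2\pi n}$ prefactors contribute $\Theta\!\big(1/\sqrt{N}\big) = 2^{-\frac12\log_2 N + \O(1)}$, yielding $\binom{N}{pN} = 2^{H(p)N - \frac12\log_2 N + \O(1)}$. Finally, since $\log_2 N = \ln N/\ln 2 > \ln N$, we have $-\frac12\log_2 N < -\frac12\ln N$, so this tighter estimate immediately implies the claimed weaker bound $2^{H(p)N - \frac12\ln N + \O(1)}$; note $H(p)$ is a fixed constant precisely because $p$ is.

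The computation is routine, and I would not expect a genuine obstacle: the only two points needing care are that $p$ must be a constant bounded away from both $0$ and $\frac12$ so that the ratio $r$ is a constant below $1$ (which is exactly the hypothesis), and the harmless slack between $\ln N$ and $\log_2 N$ in the exponent, which only weakens the inequality in the right direction. The essential idea is simply the standard observation that a binomial sum truncated below the mode is dominated by its top term.
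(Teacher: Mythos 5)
Your proof is correct. One important point of comparison: the paper does not actually prove this lemma at all --- it is stated in the appendix as a known fact (it is a standard sharpening of the entropy bound $\sum_{k \le pN}\binom{N}{k} \le 2^{H(p)N}$) and used as a black box inside the proof of Lemma~\ref{lemma:cycle-graph-prefix-conflict}. So your argument fills a gap the paper leaves open rather than paralleling an existing proof. Your two steps are both sound: the ratio bound $\binom{N}{k-1}/\binom{N}{k} = k/(N-k+1) \le p/(1-p) =: r < 1$ for $k \le pN$ correctly reduces the truncated sum to $\frac{1}{1-r}\binom{N}{\lfloor pN\rfloor}$, and Stirling correctly gives $\binom{N}{pN} = 2^{H(p)N - \frac{1}{2}\log_2 N + \O(1)}$; since $\log_2 N \ge \ln N$, your bound is in fact slightly stronger than the one claimed, and the slack goes in the right direction. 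The only cosmetic issues are the usual ones: $pN$ should be read as $\lfloor pN \rfloor$, and the constants $\frac{1-p}{1-2p}$ and $\bigl(2\pi p(1-p)\bigr)^{-1/2}$ depend on $p$ being bounded away from $0$ and $\frac{1}{2}$, which the hypothesis that $p$ is a fixed constant in $(0,\frac{1}{2})$ guarantees.
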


\begin{replemma}{lemma:cycle-graph-prefix-conflict}
    Consider a sequence of randomly shuffled cycle graph edges of size $N$ and its prefix of size $W \geq C \cdot \sqrt {N \cdot \log N}$, where $C$ is some constant which will be made clear in the proof. Suppose there are $M=pN$ pairs of adjacent colliding edges in this graph, $0<p\leq\frac 1 3$ is a constant. Then the probability that there is no collision (i.e., no pair of colliding edges) in this prefix is $\O\left(\frac 1 {N^2}\right)$.
\end{replemma}
\begin{proof}

Since the sequence is randomly shuffled, we can think of the prefix as if edges are (uniformly) randomly added to it one by one without replacement. Consider a recurrence for the probability of ``collision'' for this random process. Let $T(t, m, k)$ be the probability of a future collision at the moment when $t$ edges are already added to the prefix, $m$ pairs of colliding edges are not in the current prefix and $k$ pairs of colliding edges have exactly one edge in it. As a result, these $k$ remaining edges will cause a collision when added to the prefix, $2m$ edges will touch one edge in a colliding pair, and the other $N-t-2m-k$ edges will not change $m$ and $k$. More formally:

 $$T(t, m, k) = \frac {2m} {N-t} \cdot T(t+1, m-1, k+1) + \frac {N-t-2m-k} {N-t} T(t+1, m, k).$$

The base of this recurrence is $T(W, m, k)=1$ for any $m, k$.
Note that, in this recurrence, we can replace $k$ with $M-m$ in order to calculate $T(0, M, 0)$, since $m$ and $k$ are changing synchronously. This leads to:

 $$T(t, m) = \frac {2m} {N-t} \cdot T(t+1, m-1) + \frac {N-t-m-M} {N-t} T(t+1, m).$$

To calculate $T(0, M)$ (the probability of no collision in the prefix), we use this recurrence relation repeatedly until the moment when $t = W$ (the recurrence base case). This way $T(0, M)$ is expanded to a sum of $2^W$ terms, each of which is multiplied by $T(W, m) = 1$ for some $m$. We group these terms by the final value of $m$.

Next, we claim that all terms in the same group have the same value 
$V_m = \frac {(2M) \cdot (2M-2) \cdots (2m+2) \cdot (N-2M)\cdot(N-2M-1) \cdots (N-M-m-W+1)} {N\cdot (N-1) \cdots (N-W+1)}$. This follows from the observation that all terms in the same group had the same number of ``steps'' of type $(t,m) \rightarrow (t+1, m-1)$ (which is $M-m$) and the order of these steps among $W$ total steps does not influence the final multiplier.

Consequently, $$T(0,M) = \sum\limits_{m=M-W}^{M} \binom {W} {M-m} V_m = \sum\limits_{i=0}^{W} \binom {W} {i} V_{M-i}$$

Now we will prove that this sum is $\O\left(\frac 1 {N^2}\right)$. To do this, we first show that a small prefix of this sum is $\O\left(\frac 1 {N^2}\right)$ due to small binomial coefficients and then show that the rest of the sum is $\O\left(\frac 1 {N^2}\right)$ due to arising collision probabilities. 

Consider $\sum\limits_{i=0}^{lW} \binom {W} {i} V_{M-i}$ for some constant $0 < l < \frac 1 2$, which will be chosen later. For $i \leq lW$, 

\begin{gather*}
V_{M-i} \leq \frac {(N-2M)(N-2M-2) \cdots (N-2M-2(1-l)W+2)} {N(N-1) \cdots (N-(1-l)W+1)}  \leq \frac {(N-2M)^{(1-l)W}} {N^{(1-l)W}} \leq (1-2p)^{(1-l)W}
\end{gather*}

By Lemma~\ref{lemma:sum-combinations}, $\sum\limits_{i=0}^{lW} \binom {W} {i} \leq 2^{H(l)W - \ln W / 2 + \O(1)}$.

By combining these inequalities, we get that $\sum\limits_{i=0}^{lW} \binom {W} {i} V_{M-i} \leq (1-2p)^{(1-l)W} \cdot 2^{H(l)W - \ln W / 2 + \O(1)} = 2^{(H(l) - (1-l)\log \frac 1 {1 - 2p})W - \ln W / 2 + \O(1)}$. As a result, we can choose constant $l$ so that $H(l) - (1-l)\log \frac 1 {1 - 2p} < 0$, which would mean that this sum decreases exponentially with $W$ and in particular is $\O\left(\frac 1 {N^2}\right)$.

What is left is to prove that $\sum\limits_{i=lW+1}^{W} \binom {W} {i} V_{M-i}$ is also small.

First consider a similar recurrence relation $\tilde{T}(t, m) = \frac {{\color{red}{M+m}}} {N-t} \tilde{T}(t+1,m-1) + \frac {N-t-m-M} {N-t} \tilde{T}(t+1, m)$. Intuitively, the difference in these recurrence relations is caused exactly by the probability of collision. It is easy to see that $\tilde{T}(t, m) = 1$ solves the recurrence. Similarly, we can represent $\tilde{T}(0, M)$ as $\sum\limits_{i=0}^{W} \binom {W} {i} \tilde{V}_{M-i}$, which also equals $1$.

It can be shown that $\frac {V_{M-i}} {\tilde{V}_{M-i}} = \frac {(2M) (2M-2) \cdots (2M-2i+2)} {(2M) (2M-1) \cdots (2M-i+1)}$.
If we remove the first three quarters of these multipliers (since the corresponding fractions are $\leq 1$), we obtain that \begin{align*} \frac {V_{M-i}} {\tilde{V}_{M-i}} &\leq \frac {(2M - \frac 3 2 i) \cdots (2M - 2i+2)} {(2M - \frac 3 4 i) \cdots (2M - i + 1)} \leq \left(\frac {2M - \frac 3 2 i} {2M - i + 1} \right)^{i/4} \leq \left(1 - \frac {i} {4M}\right)^{i/4} \leq e^{-i^2/16M} \end{align*}.

The last step follows from Bernoulli's inequality. When $i \geq lW$, $\frac {V_{M-i}} {\tilde{V}_{M-i}} \leq e^{-(lW)^2/16M} = e^{-l^2C^2 N \log N / 16pN} = e^{-l^2C^2 \log N / 16 p}$. Now, given that $l$ and $p$ are constants, we can fix constant $C$ so that $-l^2C^2\cdot \ln 2/16p < -2$. This would mean that for $i \geq lW$, $\frac {V_{M-i}} {\tilde{V}_{M-i}} \leq \frac 1 {N^2}$.

\begin{gather*}
\sum\limits_{i=lW+1}^{W} \binom {W} {i} V_{M-i} = \sum\limits_{i=lW+1}^{W} \binom {W} {i} \tilde{V}_{M-i} \cdot \frac {V_{M-i}} {\tilde{V}_{M-i}} \leq\\ \leq \sum\limits_{i=lW+1}^{W} \binom {W} {i} \tilde{V}_{M-i} \cdot \frac 1 {N^2} \leq \frac 1 {N^2} \sum\limits_{i=0}^{W} \binom {W} {i} \tilde{V}_{M-i} = \frac 1 {N^2}
\end{gather*}

Summing up, the probability that there is no collision in the prefix $T(0, M) = \O\left(\frac 1 {N^2}\right)$.

\end{proof}

\pagebreak
\section{Full Experimental Analysis}
\label{appendix:experiments}
\paragraph{Scenario} We compare our parallel algorithms with other algorithms on a connected components benchmark. Basically, an algorithm receives a sequence of edges of some graph as an input and then inserts them in the data structure in parallel. Concurrent algorithms also receive a sequence of edges but then use light-weight load balancing to distribute edges evenly and to avoid the situation when some threads have many edges left to insert while other threads already completed their part. Specifically, concurrent algorithms split the sequence into batches and employ \texttt{Fetch-And-Increment} when a thread needs to pick the next batch.

\paragraph{Algorithms and Graph Inputs} We benchmark the following algorithms: the proposed internally-deterministic algorithm with fixed prefix size, as well as an optimized variant with adaptive prefix size, the efficient heuristic algorithm of Blelloch et al.~\cite{ParallelConnectedComponents}, as well as the fully-concurrent algorithm of Jayanti and Tarjan~\cite{JayantiT21} with the optimizations from~\cite{ConnectIt,AlistarhFK19}.

We execute on both real-world and synthetic graphs, with tens of millions edges. The USA roads graph and Kron graph were taken from DIMACS Challenge competitions~\cite{dimacs:challenge9, dimacs:challenge10}. The source of other real graphs is SNAP Graph Collection~\cite{snapnets}. Random graphs of various densities were generated using Erdos-R{\'e}nyi model~\cite{ErdosRenyi}.

For fairness reasons, we use the same path compaction and linking techniques in all evaluated algorithms. Specifically, we choose one-try splitting and linking by random priority techniques as they are shown to be among the fastest in practice and the simplest in implementation~\cite{AlistarhFK19,ConnectIt}. It should be noted that \emph{concurrent} linking by random priority was proved to be theoretically efficient by Jayanti and Tarjan~\cite{JayantiT21} but only under the assumption that linearization order of unites is independent from random priorities, which may be not satisfied in practice. 

\begin{figure}[b]
\centering
\includegraphics[width=\linewidth]{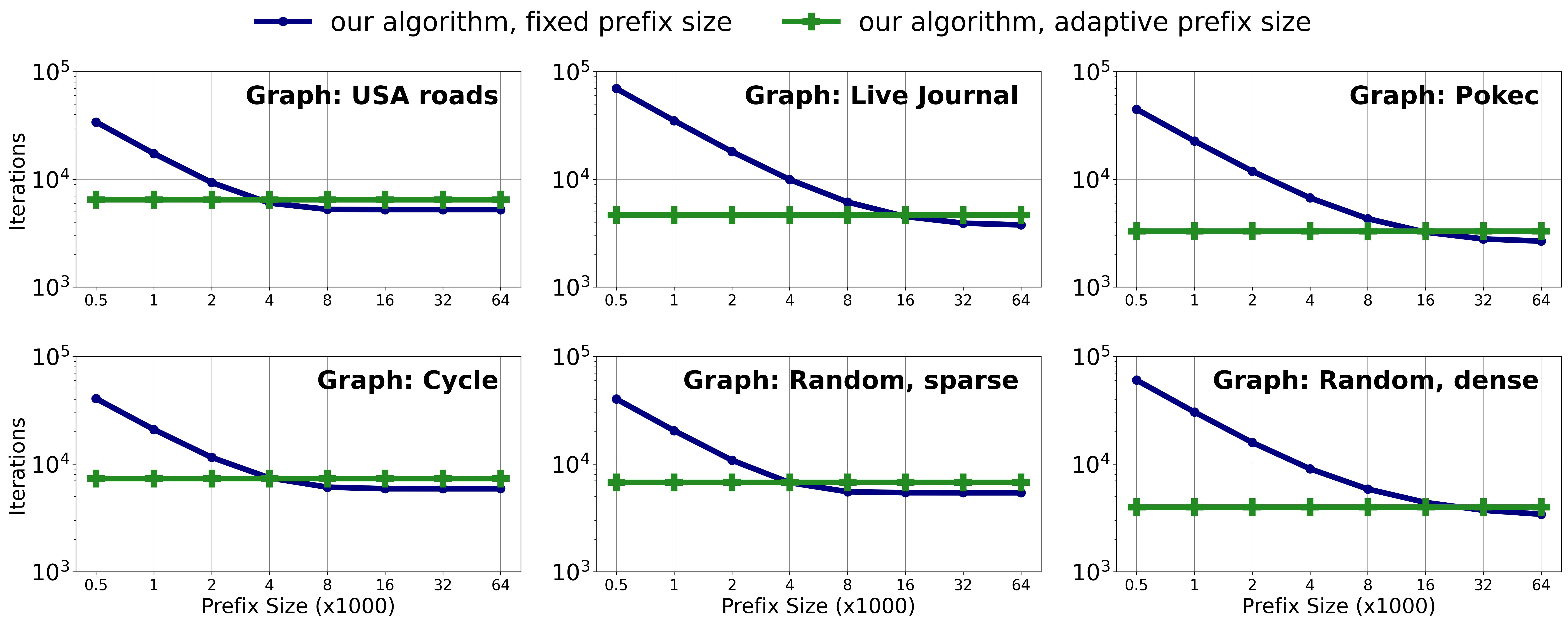}
%\vspace{-0.5em}
\caption{Dependency of the prefix size on the number of iterations in our parallel algorithm. The adaptive algorithm chooses the prefix size automatically and still achieves near-optimal number of iterations.}
\label{figure:plot-iterations}
\end{figure}

\begin{figure}[htb!]
\centering
\includegraphics[width=\linewidth]{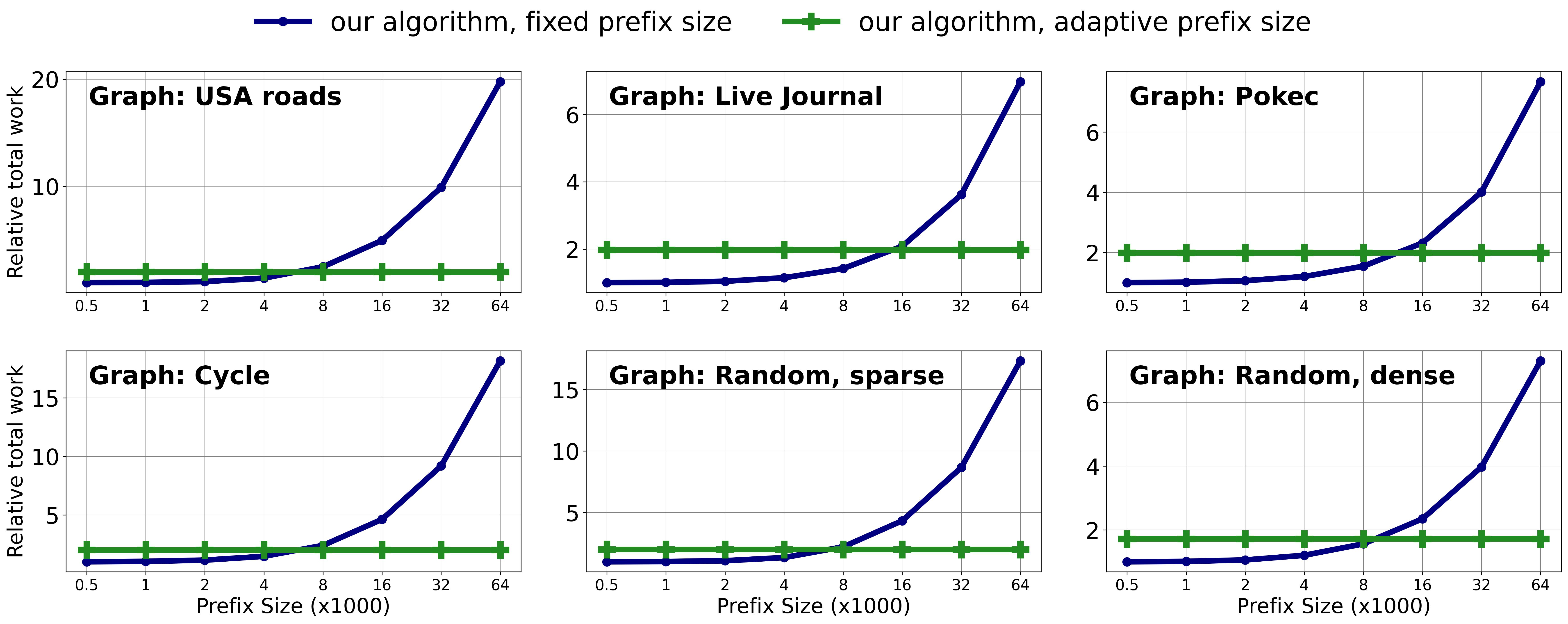}
%\vspace{-0.5em}
\caption{Comparison of the total work in the adaptive parallel algorithm and non-adaptive algorithm for various prefix sizes relative to the amount of work in the sequential algorithm.}
\label{figure:plot-work}
\end{figure}

\paragraph{Hardware and Software}
The benchmarks are implemented in Java (OpenJDK 17.0.4) and are run with Java Microbenchmark Harness (JMH) to mitigate JIT and garbage collector effects on the performance~\cite{jmh}. We use an Intel Xeon Gold 6150 machine with 4 NUMA sockets, each with 18 cores and hyper-threading enabled, and with 512GB RAM.

\paragraph{Parameters and Optimizations} 
In our parallel Union-Find algorithm, the prefix size $S$ provides a trade-off between scalability and ``useless'' work. The larger prefix size is, the more threads can process it in parallel, but the more work will potentially be wasted because of collisions. Corollary~\ref{corollary:window-work-efficient} suggests that the total work asymptotically stays the same if $S=\O(|E|^{\frac 1 3-\varepsilon})$, so it is reasonable to choose the prefix size close to $|E|^{\frac 1 3}$. However, our experiments show that there is a simpler and more general adaptive approach.
The adaptive parallel algorithm changes the prefix size $S$ between iterations. Specifically, when no collisions occur in an iteration, the algorithm doubles $S$. When there is a collision in the first half of the prefix, for the next iteration the prefix size will be halved. Figures~\ref{figure:plot-iterations} and \ref{figure:plot-work} show that this approach yields near-optimal number of iterations, while the amount of wasted work keeps being limited. Surprisingly, for some points the adaptive algorithm is better both in terms of the number of iterations and the total work, which may mean that for some graphs it is better to change the prefix size $S$ during the execution, rather than to keep it fixed.

One of the main properties of the presented parallel algorithm is that it is internally deterministic. Yet, it is unclear whether to get this property we sacrifice Union-Find performance. To answer this question, we compare the algorithm to the concurrent Union-Find algorithm of Jayanti et al.~\cite{Tarjan2014}, in particular to its fast implementation from~\cite{ConnectIt, AlistarhFK19}, and to Blelloch et al. parallel algorithm~\cite{InternallyDeterministicParallel}. We emphasize that this is not a fair comparison, since neither of these algorithms are internally deterministic. When implementing our algorithm, we use the same optimizations as in~\cite{WorkEfficientUnionFind}. Specifically, we do not implement the BFS-like algorithm for path compaction and instead compact paths during searches using \texttt{Compare-And-Set}, as in the concurrent algorithm.

\paragraph{Performance Comparison}
The performance benchmark results are available in Figure~\ref{figure:plot-throughput}. First of all, we can see that our adaptive parallel algorithm matches or outperforms the non-adaptive algorithm with fixed prefix size $S$. As expected, the heuristic algorithm of Blelloch et al. has better performance than our algorithm, since our algorithm has additional restriction of being internally deterministic but the performance differences are usually small. The well-optimized fully-concurrent approach of Jayanti and Tarjan is superior to parallel algorithms on sparse graphs (USA roads, the cycle graph, and the random sparse graph). The reason is that, for sparse graphs, there is almost no contention between different threads, and thus, the concurrent algorithm has almost no synchronization cost. In dense graphs, the asynchronous nature of the concurrent algorithm does not provide any benefits over the parallel algorithms and incurs additional synchronization work.

\begin{figure}[htb!]
\centering
\includegraphics[width=\linewidth]{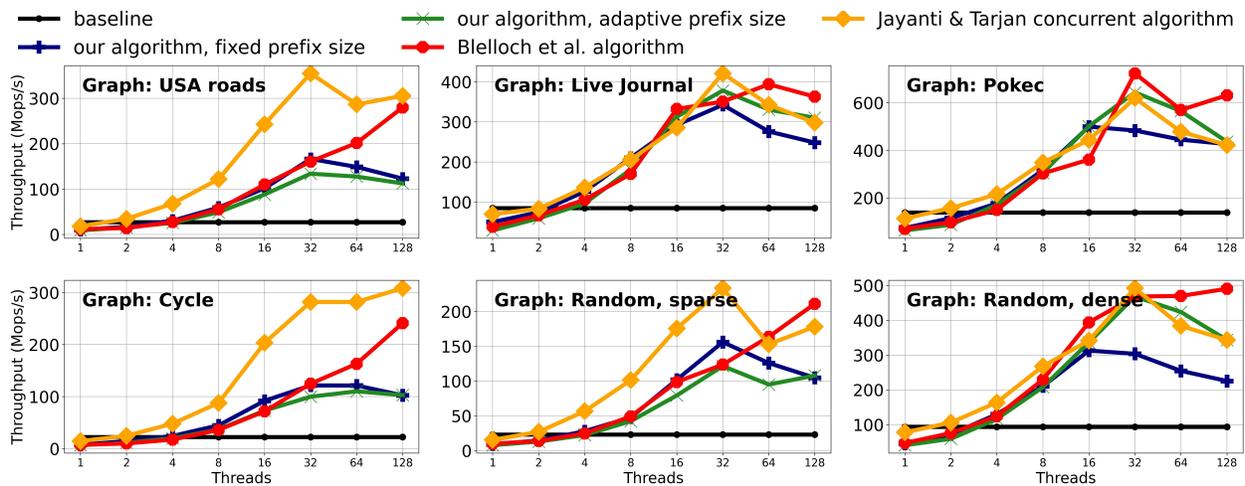}
%\vspace{-0.5em}
\caption{Performance comparison of the parallel and concurrent Union-Find algorithm. The baseline is the sequential algorithm. All algorithms experience a performance drop for more than $64$ threads, because $\leq 32$ threads can fit in a single NUMA socket on the benchmark machine. The prefix size $S$ for our parallel algorithm and for Blelloch at al. algorithm was chosen optimally based on experiment results.}
\label{figure:plot-throughput}
\end{figure}

\end{document}